\DeclareSIUnit\Msol{M_\odot}
\DeclareSIUnit\pc{pc}
\DeclareSIUnit\kpc{\kilo\pc}
\DeclareSIUnit\yr{yr}
\DeclareSIUnit\Myr{\mega\yr}
\DeclareSIUnit\Gyr{\giga\yr}
\DeclareSIUnit\AU{AU}
\newcommand*\diff{\mathop{}\!\mathrm{d}} 
\newcommand*\R{\mathbb R} 
\newcommand*\measure{\mathcal L}
\newcommand\BT{BT08}
\DeclareMathOperator\supp{supp}
\newtheorem{thm}{Theorem}[section]
\newtheorem{lem}[thm]{Lemma}
\newtheorem{prop}[thm]{Proposition}
\newenvironment{customlem}[1]
{\innercustomlem}
{\endinnercustomlem}
\theoremstyle{definition}
\theoremstyle{remark}
\newtheorem*{rem*}{Remark}
\newtheorem{rem}[thm]{Remark}
\numberwithin{equation}{section}
\title{\LARGE\textbf{Modelling Spiral Galaxies}}
\author{Joachim Frenkler\\ Fakultät für Mathematik, Physik und Informatik\\Universität Bayreuth \\ D-95440 Bayreuth, Germany \\ joachim.frenkler@uni-bayreuth.de}
\date{\today}
\begin{document}
	\maketitle
	
	\begin{abstract}
		We develop a new technique to equip models of spiral galaxies with self-consistent dynamics that match observations. We apply our technique and construct a model for the Milky Way with a dynamical interstellar medium (ISM). In simulations a four-arm spiral structure emerges from this model that is similar to the one observed in the Milky Way's ISM. Further, in our model the Jeans instability offers an explanation for the observed velocity dispersion of atomic hydrogen in the ISM; this instability vanishes from our model if we choose a velocity dispersion just above the observed one. Our model uses baryonic, dark matter, which resides in the disc and is dynamically cold. This makes our model a typical example for the Bosma effect.
	\end{abstract}
	
	\section{Introduction}
	
	The Milky Way - and spiral galaxies in general - still ask us many a riddle. To three of them we develop new answers here in this paper:
	\begin{enumerate}
		\item[] Where does the four-arm spiral pattern in the Milky Way's ISM originate from \citep{2019ApJ...885..131Reid,2010ApJ...722.1460SteimanCameron}?
		\item[] Why does atomic hydrogen have in most spiral galaxies the same velocity dispersion well above the value expected from thermal considerations \citep{2009AJ....137.4424TamburroLeroyEtAl}?
		\item[] Is a halo of non-baryonic, dark matter necessary to explain the Milky Way's flat circular velocity curve?
	\end{enumerate}
	The key in answering these questions is a model for our galaxy where the interstellar medium (ISM) is equipped with self-consistent dynamics. These dynamics are difficult to model because the ISM's mass is moving on almost circular orbits around the galactic centre. This `almost' makes things difficult; how to implement it in a self-consistent model? But we cannot ignore it because it is important for stability. We solved this mathematical problem and this enables us to shed a new light on the three questions above.
	
	The distribution function in our model is a function of the third component of the angular momentum and the energy only and our technique can easily be extended to construct a self-consistent, multi-phase model for the whole galaxy; a task that was elusive up to now \citep{2020IAUS..353..101Binney}. Our technique is based on a fixed-point like algorithm, which is an improved version of the algorithm from \cite{2015MNRAS.446.3932AndreassonRein}, plus a good understanding of how distribution functions, which match observations, must look like. The resulting distribution function is comparable with the one of a cut-out Mestel disc like it is, e.g., studied in \cite{1976PhDT........29Zang}, \cite{1981seng.proc..111Toomre} or \cite{2019MNRAS.489..116SellwoodCarlberg}. But here in this paper our distribution function is embedded in a realistic model for our galaxy; not in an infinitely extended Mestel disc with infinite mass, like in the papers just cited.
	
	How do we model the Milky Way in this paper? The Milky Way has three baryonic components, a bulge, a stellar disc and the interstellar medium (ISM)\footnote{In fact each of the Milky Way's baryonic components is made up of several distinct sub-components. But for this paper our simplified three component model is sufficient.}. We include the bulge and the stellar disc as rigid components and model the ISM dynamically. In the Milky Way, the ISM's gaseous mass is confined to a very thin disc, which exhibits spiral patterns and where all mass is moving on almost circular orbits around the galactic centre. Simplifying, we assume in our model that the disc is razor-thin, i.e., all mass is restricted to live in the plain, and - at first - we ignore the spiral patterns and assume that the disc is axially symmetric. One could be tempted to simplify further and assume that all mass is on purely circular orbits. However, this would be a bad idea, because such a disc destroys itself very fast \citep[§6.2.3]{2008gady.book.....BinneyTremaine}. It is therefore important that we have at each position a dispersion of the velocities -- and this makes things complicated. In total we will search for a rigid bulge, a rigid stellar disc and an axisymmetric distribution function $f(x,v)\geq 0$ on position-velocity space with $x,v\in\R^2$ that models the ISM.
	
	Since most of the mass in our galaxy is on almost circular orbits, one can deduce from the observational data how the circular velocity curve looks like \citep{2019ApJ...871..120Eilers}. From this, one can calculate the axisymmetric gravitational potential $U_{Gal}$ of our galaxy. In our model we assume that the potential is time-independent and we demand that the distribution function $f$ is time-independent, too. This is the case if $f$ is constant along each particle orbit, i.e., if for a given test particle with orbit $(x(t),v(t))$, where
	\begin{align*}
	\dot{x} &= v, \\
	\dot{v} &= -\nabla U_{Gal}(x),
	\end{align*}
	it holds that
	\begin{equation*}
	\frac{\diff}{\diff t}f(x(t),v(t)) = 0.
	\end{equation*}
	This is the case if and only if $f$ is a solution of the (time-independent) collisionless Boltzmann equation\footnote{In mathematics this equation is often called the Vlasov equation.}
	\begin{equation} \label{Vlasov-Poisson system eq 1}
	v\cdot \partial_x f - \nabla U_{Gal} \cdot \partial_v f = 0.
	\end{equation}
	We use Newton's law of gravitation and thus the gradient of the gravitational potential that corresponds to $f$ is given by
	\begin{equation} \label{Vlasov-Poisson system eq 2}
	\nabla U_f(x) := G\int_{\R^2} \frac{x-y}{|x-y|^3} \Sigma_f(y) \diff y, \quad x\in\R^2,
	\end{equation}
	where
	\begin{equation} \label{Vlasov-Poisson system eq 3}
	\Sigma_f(x) := \int_{\R^2} f(x,v) \diff v
	\end{equation}
	is the (flat) density on position space $\R^2$ that corresponds to $f$, and $G$ is the constant of gravitation. We search for a model for the Milky Way, so we demand that the gravitational potential generated by our model must equal the gravitational potential $U_{Gal}$ of our galaxy, i.e.,
	\begin{equation}  \label{Vlasov-Poisson system eq 4}
	\nabla U_{Gal} = \nabla U_{bulge} + \nabla U_{st.disc} + \nabla U_f.
	\end{equation}
	We call $f$ a self-consistent model for the Milky Way's ISM if it satisfies \eqref{Vlasov-Poisson system eq 1} - \eqref{Vlasov-Poisson system eq 4}.
	
	Mathematically the problem of finding such models is challenging because we have to find models that are compactly supported, fast rotating and reasonable stable. Each of the following models fails at least at one aspect: The infinitely extended Mestel disk mentioned above is not compactly support and has infinite mass. A cold disk where all mass is on purely circular orbits is highly unstable. And while the models from \cite{2015MNRAS.446.3932AndreassonRein} and \cite{2006FirtRein} are compactly supported and stable, they are not fast rotating. Here in this paper we present a new method to construct models that succeed in all three aspects.
	
	The outline of this paper is as follows: In Section \ref{section-Mestel-disc-to-Milky-Way-model} we study a new class of self-consistent models for the Mestel disc, analyse their inner structure and construct from this understanding our model for the Milky Way. We describe how multiphase models can be constructed with our technique. In Section \ref{section-Comparision-with-other-mass-models} we compare our model with other models from the literature and classify our model as a typical example for the Bosma effect. In Section \ref{section-Stability-and-Spiral-Activity} we study numerically the stability of our model. There are two instabilities and they correctly predict the spiral structure in the Milky Way's ISM and the velocity dispersion of atomic hydrogen. This explanations would fail if we included non-baryonic, dark matter. In Section \ref{section-Conclusion} we summarize our results and identify the tasks that should be tackled next.
	
	\section{From the Mestel disc to a realistic model of the Milky Way} \label{section-Mestel-disc-to-Milky-Way-model}
	
	\subsection{The Mestel disc} \label{subsection-Mestel-disc}
	
	As starting point we analyse first the so called Mestel disc \citep{1963MNRAS.126..553Mestel}. This disc has the flat, axially symmetric density
	\begin{equation*}
	\Sigma_0(r) := \frac{v_0^2}{2\pi G r}, \quad r > 0,
	\end{equation*}
	where $v_0>0$ is a constant with dimension of velocity. The derivative of the axisymmetric gravitational potential generated by this flat mass distribution can be approximated using the well known formula for a spherically symmetric mass distribution
	\begin{equation} \label{U0-spherical-approximation}
	U_0' (r) \approx \frac{GM(r)}{r^2} = \frac{v_0^2}{r}
	\end{equation}
	where $M(r)$ denotes the mass inside the radius $r$. In general this is only a rough approximation for a flat mass distribution but in the special case of the Mestel disc, this approximation gives indeed the correct values \cite[§2.6.1a]{2008gady.book.....BinneyTremaine} and we have for every $r>0$
	\begin{equation} \label{U0prime}
	U_0' (r) = \frac{v_0^2}{r}.
	\end{equation}
	The velocity on a circular orbit is related to the derivative of the potential at radius $r$ via the simple formula
	\begin{equation} \label{vc=sqrt(rU'(r))}
	v_c(r) = \sqrt{r U_0'(r)}.
	\end{equation}
	Hence the Mestel disc has an everywhere flat circular velocity curve with $v_c(r)=v_0$. For the understanding of galaxies like the Milky Way, which exhibit an almost flat circular velocity curve, it is therefore useful to analyse first the analytically accessible Mestel disc. In the following theorem we equip this disc with dynamics:
	
	\begin{thm}
		Let $v_0>0$ and let $U_0(r)=v_0^2\log r$, $r>0$, be the potential of the Mestel disc. Take a function $\Phi_0:[0,\infty)\rightarrow[0,\infty)$ measurable such that
		\begin{equation*}
		0 < I := \int_0^\infty \int_{-\infty}^\infty \frac{1}{v_2}\Phi_0\left(\frac{v_1^2+v_2^2}{2} - v_0^2 \log  \frac{v_2}{v_0}  - \frac{v_0^2}{2}\right) \diff v_1 \diff v_2 < \infty.
		\end{equation*}
		Set
		\begin{equation*}
			C_0 := \frac{v_0^2}{2\pi G I}
		\end{equation*}
		and
		\begin{equation} \label{definition-f0}
			f_0(L_z,E) := \frac{C_0}{L_z} \Phi_0\left( E - v_0^2\log\frac{L_z}{v_0} - \frac{v_0^2}{2} \right) \mathbf{1}_{\{L_z>0\}}
		\end{equation}
		where 
		\begin{equation*}
		L_z(x,v) := x_1v_2-x_2v_1
		\end{equation*}
		is the third component of the angular momentum and
		\begin{equation*}
		E(x,v) := \frac{1}{2}|v|^2 + U_0(x)
		\end{equation*}
		is the local energy. This $f_0$ is a self-consistent model for the Mestel disc in the following sense:
		
		The density that belongs to $f_0$ equals the density of the Mestel disc:
		\begin{equation*}
			\Sigma_{f_0} = \Sigma_0.
		\end{equation*}
		Further along every solution of
		\begin{align} \label{equ thm Mestel disc ODE} 
			\dot{x} &= v, \\
			\dot{v} &= -\nabla U_0(x) \nonumber
		\end{align}
		$L_z$ and $E$ are conserved, because $U_0$ is axisymmetric and time-independent. Thus $f_0$ is constant along every solution of the ODE \eqref{equ thm Mestel disc ODE} and in this sense it solves the Vlasov equation \eqref{Vlasov-Poisson system eq 1} with $\nabla U_{Gal}$ replaced by $\nabla U_0$.
	\end{thm}

	\begin{rem*}
		In the following we refer to $L_z$ as the angular momentum because the other two components of the angular momentum are zero.
		
		A sufficient condition for $0<I<\infty$ is for example that $\Phi_0\in L^\infty_+([0,\infty))$, has compact support and does not vanish everywhere.
	\end{rem*}

	\begin{proof}[Proof of the Theorem]
		For $x\in\R^3\backslash\{0\}$ and $r=|x|$ we use the transformation
		\begin{equation*}
			(v_r,v_t) := \left( \frac{x\cdot v}{r}, \frac{L_z}{r} \right).
		\end{equation*}		
		$v_r$ denotes the velocity in radial direction and $v_t$ the velocity in tangential direction. Using this transformation we get
		\begin{align} \label{equ proof thm Mestel disc argument of Phi0}
			E - v_0^2\log \frac {L_z }{v_0} - \frac{v_0}{2} & = \frac { v_r^2+v_t^2}{2} + v_0^2\log r - v_0^2 \log\frac{rv_t}{v_0} - \frac{v_0}{2}  \nonumber\\
			& = \frac{v_r^2+v_t^2}{2} - v_0^2 \log \frac{v_t}{v_0} - \frac{v_0}{2}.
		\end{align}
		Since $\log s \leq s-1$ for all $s>0$,
		\begin{align}
			E - v_0^2\log \frac {L_z }{v_0} - \frac{v_0}{2} & \geq \frac { v_r^2+v_t^2}{2} - v_0(v_t-v_0) - \frac{v_0}{2} \nonumber\\
			&= \frac{v_r^2}{2} + \frac{(v_t-v_0)^2}{2} \geq 0. \label{equ proof thm Mestel disc argument of Phi0 is non-negative}
		\end{align}
		Thus the argument of $\Phi_0$ is everywhere non-negative and $f_0$ is well defined. Further \eqref{equ proof thm Mestel disc argument of Phi0} implies
		\begin{align*}
			\Sigma_{f_0}(r) & = \int f_0(L_z,E) \diff v \\
			& = \frac{C_0}{r} \int_0^\infty\int_{-\infty}^\infty \frac{1}{v_t} \Phi_0\left( \frac{v_r^2+v_t^2}{2} - v_0^2 \log \frac{v_t}{v_0} - \frac{v_0}{2} \right) \diff v_r \diff v_t\\
			& = \frac{v_0^2}{2\pi G r} = \Sigma_0(r).
		\end{align*}
		Thus $f_0$ is a self-consistent model for the Mestel disc in the above sense.
	\end{proof}
	
	There are two choices for $\Phi_0$ where $f_0$ resembles known distributions functions for the Mestel disc: The first choice is to set $\Phi_0$ as the $\delta$-distribution. In this case we have a cold disc were all mass is on purely circular orbits. This is because the argument of $\Phi_0$ is zero if and only if the radial component of the velocity $v_r=0$ and the tangential component of the velocity $v_t=v_0$ (see inequation \eqref{equ proof thm Mestel disc argument of Phi0 is non-negative}). The second choice is $\Phi_0(\eta) = \exp(-C\eta)$ with $C>0$. In this case we get Toomres model for the Mestel disc that was studied extensively in \cite{1976PhDT........29Zang}.
	
	At a first glance $f_0$ might look a bit complicated and unmotivated, but with the help of the next Lemma we are able to write it down in a simpler form that is more intuitive.
	
	\begin{lem}
		Let $v_0>0$ and let $U_0(r)=v_0^2\log r$, $r>0$, be the potential of the Mestel disc. Every orbit in the potential $U_0$ can be characterized uniquely -- up to rotations and shifts in time -- by its values for $L_z$ and $E$. Let us study orbits with $(L_z,E)\in (0,\infty)\times\R$; we call this half-plane the $L_z$-$E$-plane. All these orbits are moving counter-clock wise around the origin. The angular momentum-energy curve of circular orbits
		\begin{align*}
			(L_c(r),E_c(r)) & = \left(rv_0,v_0^2\log r +\frac{v_0^2}{2}\right) \\
			& = \left( L_c, v_0^2 \log \frac{L_c}{v_0} + \frac{v_0^2}{2}\right) = \left(L_c,E_c\left(\frac{L_c}{v_0}\right)\right)
		\end{align*}
		can either be characterized by $r\in(0,\infty)$ or by $L_c\in(0,\infty)$. It divides the $L_z$-$E$-plane into two parts. All admissible orbits have an $L_z$-$E$-coordinate above the angular momentum-curve, there are no orbits below. The orbits that are almost circular are those that are close to the angular momentum curve.
	\end{lem}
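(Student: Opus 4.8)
The plan is to reduce the planar orbit problem for \eqref{equ thm Mestel disc ODE} to a one-dimensional radial problem via the effective potential. Writing an orbit in polar coordinates $(r,\theta)$, the conserved quantities are $L_z=r^2\dot\theta$ and $E=\tfrac{1}{2}\dot r^2+U_{eff}(r)$ with
\[
  U_{eff}(r):=\frac{L_z^2}{2r^2}+v_0^2\log r .
\]
A short computation gives $U_{eff}'(r)=(v_0^2r^2-L_z^2)/r^3$, so for $L_z>0$ the function $U_{eff}$ decreases strictly on $(0,L_z/v_0)$ and increases strictly on $(L_z/v_0,\infty)$, attaining its unique minimum at $r_c:=L_z/v_0$ with value $E_c:=U_{eff}(r_c)=v_0^2\log(L_z/v_0)+\tfrac{1}{2}v_0^2$; moreover $U_{eff}(r)\to\infty$ as $r\to0^+$ and as $r\to\infty$. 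Hence an orbit with angular momentum $L_z>0$ exists if and only if $E\ge E_c$, which is exactly the statement that $(L_z,E)$ lies on or above the curve $L_c\mapsto\big(L_c,\,v_0^2\log(L_c/v_0)+\tfrac{1}{2}v_0^2\big)$; this is nothing but inequality \eqref{equ proof thm Mestel disc argument of Phi0 is non-negative} re-read in the $L_z$-$E$-plane, and since $L_c=rv_0$ is a bijection of $(0,\infty)$ onto itself the curve is parametrised equivalently by $r$ or by $L_c$. Equality $E=E_c$ forces $\dot r\equiv0$ and $r\equiv r_c$, so the curve consists precisely of the circular orbits, and below it there are none.

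For $E>E_c$ the shape of $U_{eff}$ yields exactly two radii $0<r_-<r_c<r_+$ solving $U_{eff}(r_\pm)=E$, and the radial motion is the oscillation between these turning points governed by $\dot r^2=2(E-U_{eff}(r))$, while the angle obeys $\dot\theta=L_z/r^2>0$; thus $\theta$ is strictly increasing and every such orbit runs counter-clockwise about the origin (so do the circular orbits, since there $\dot\theta=v_0/r_c>0$). For the uniqueness up to rotations and time shifts I would bring the orbit into normal position: shift time so that $t=0$ is a pericentre passage, i.e.\ $r(0)=r_-$, and rotate so that $x(0)=(r_-,0)$; then $\dot r(0)=0$, and $L_z=x_1v_2-x_2v_1>0$ forces $v(0)=(0,L_z/r_-)$. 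This initial datum depends only on $(L_z,E)$, so by uniqueness of solutions of \eqref{equ thm Mestel disc ODE} the orbit is determined by $(L_z,E)$ up to exactly the rotation and time shift used to normalise it (the circular case $E=E_c$ being immediate, with $r_-$ replaced by $r_c$); conversely $L_z$ and $E$ are constants of the motion, so nothing finer is needed. (For $L_z<0$ one obtains the mirror-reflected, clockwise picture, and $L_z=0$ is the degenerate radial case, which is irrelevant here because the model carries the factor $\mathbf{1}_{\{L_z>0\}}$.)

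Finally, to make precise that the near-circular orbits are the ones near the curve, I would measure \emph{almost circular} by the radial amplitude $a(L_z,E):=r_+-r_-$ (equivalently by $\sup_t v_r^2=2(E-E_c)$ and the spread of $v_t=L_z/r$ about $v_0$). Since $U_{eff}$ is smooth and strictly convex near $r_c$ with $U_{eff}''(r_c)=2v_0^4/L_z^2>0$, a Taylor expansion gives $r_\pm=r_c\pm\sqrt{2(E-E_c)/U_{eff}''(r_c)}+o\big(\sqrt{E-E_c}\big)$ as $E\downarrow E_c$, so $a(L_z,E)\to0$ exactly as $(L_z,E)$ approaches the circular-orbit curve, and conversely an orbit confined to a thin annulus about a circle must have $E$ close to $E_c$; to leading order the radial motion is then the harmonic epicyclic oscillation. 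I expect the one genuinely delicate point to be the uniqueness clause: one has to check that the normalisation by a rotation and a time translation — both symmetries of \eqref{equ thm Mestel disc ODE}, the first because $U_0$ is axisymmetric, the second because $U_0$ does not depend on time — can always be performed, and that two orbits sharing $(L_z,E)$ therefore differ by such a transformation. Everything else is a routine discussion of the one-dimensional potential $U_{eff}$.
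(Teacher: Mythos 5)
Your proposal is correct and follows essentially the same route as the paper: reduction to the effective potential $U_{\text{eff}}(r)=v_0^2\log r+L_z^2/(2r^2)$, its monotonicity and unique minimum at $r_c=L_z/v_0$, the turning-point analysis giving $E\geq E_c$ with equality exactly for circular orbits, and ODE uniqueness to pin down the orbit up to rotation and time shift. Your pericentre normalisation of the uniqueness step and the Taylor/epicyclic quantification of ``almost circular'' are merely more explicit versions of what the paper does.
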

	
	\begin{proof}
		We only consider orbits with $L_z\neq 0$ since these are the orbits that do not pass through the origin. Let $L_z\neq0$ and let
		\begin{equation*}
		U_{\text{eff}}(s) := v_0^2\log s + \frac{L_z^2}{2s^2}, \quad s > 0,
		\end{equation*}
		be the corresponding effective potential. We have
		\begin{equation*}
			U_{\text{eff}}'(s) = \frac{1}{s} \left( v_0^2 - \frac{L_z^2}{s^2} \right).
		\end{equation*}
		Hence $U_{\text{eff}}$ is decreasing for $0<s<r_c = L_z/v_0$. It is increasing for $s>r_c$ and it takes its minimum at $s=r_c$. Further
		\begin{equation*}
			\lim_{s\searrow 0} U_{\text{eff}}(s) = \lim_{s\rightarrow \infty} U_{\text{eff}}(s) = \infty.
		\end{equation*}
		Let $(x(t),v(t))$ be an orbit in the potential $U_0$ with angular momentum $L_z\neq 0$. Set $r:=|x|$ and $v_r:=x\cdot v/r$. Then
		\begin{align} \label{equ proof Lz E plane ODE}
			& \dot r = v_r, \\
			& \dot v_r = - U_{\text{eff}}'(r). \nonumber
		\end{align}
		The energy $E$ along this orbit is constant and we have
		\begin{equation*}
			E = \frac 12 |v_r|^2 + U_{\text{eff}}(r) \geq U_{\text{eff}}(r_c).
		\end{equation*}
		Thus there are uniquely determined $0<r_1< r_c < r_2$ such that
		\begin{equation*}
			E = U_{\text{eff}}(r_1) = U_{\text{eff}}(r_2).
		\end{equation*}
		The right side $(v_r,-U'_{\text{eff}}(r))$ of the ODE \eqref{equ proof Lz E plane ODE} is locally Lipschitz continuous for $(r,v_r)\in(0,\infty)\times\R$. Thus uniqueness implies that there are $t_1,t_2\in\R$ such that $r(t_1)=r_1$ and $r(t_2) = r_2$, and that every orbit with the same values $L_z$ and $E$ is identical to $(x(t),v(t))$ up to rotations and shifts in time. If we consider another orbit $(\tilde x, \tilde v)$ with angular momentum $\tilde L_z \neq L_z$ then obviously the orbit is different from $(x,v)$ because at the same position the orbits will have different tangential velocities. If $\tilde L_z = L_z$ but the energy $\tilde E \neq E$, then $\tilde r_1 \neq r_1$ and $\tilde r_2 \neq r_2$ and hence the orbits are different, too; $\tilde r_1$ and $\tilde r_2$ are defined in the same manner as $r_1$ and $r_2$. Thus every orbit in the potential $U_0$ of the Mestel disc is uniquely characterized by its values for $L_z$ and $E$.
		
		Consider now two test particles with the same angular momentum $L_z>0$. Assume that the first particle is moving on a circular orbit ar radius $r_c = L_z/v_0$ and that the second particle is moving on an eccentric orbit. A short look at the effective potential tells us that at some time $t\in\R$ the second particle has to appear at the radius $r_c$, too. Then the energy of the first particle is 
		\begin{equation*}
			E_1 = U_{\text{eff}}(r_c)
		\end{equation*}
		and the energy of the second particle is
		\begin{equation*}
			E_2 = U_{\text{eff}}(r_c) + \frac{1}{2} |v_r(t)|^2 > U_{\text{eff}}(r_c) = E_1.
		\end{equation*}
		Hence the $L_z$-$E$-coordinate of every orbit is located above the angular momentum-energy curve of circular orbits. Further a particle is on an almost circular orbit if the radial component of its velocity is small, i.e., if it has an $L_z$-$E$-coordinate close to the angular momentum-energy curve.
	\end{proof}
	
	Now let us write down $f_0$ a second time:
	\begin{equation} \label{f0=Psi*Phi(E-Ec)}
		f_0(L_z,E) = \frac{C_0}{L_z} \Phi_0 \left( E - E_c\left(\frac{L_z}{v_0}\right) \right)\mathbf{1}_{\{L_z>0\}}.
	\end{equation}
	For the rest of this paper we use the following simple form for $\Phi_0$:
	\begin{equation} \label{phi-definition}
		\Phi_0(\eta) :=
		\begin{cases}
		1 \quad\text{if } \eta < (2\sigma)^2,\\
		0 \quad\text{else,}
		\end{cases} 
	\end{equation}
	where $\sigma>0$ is a parameter with the dimension of velocity. Now we can explain the structure of $f_0$ in a much more intuitive way: We take a narrow stripe along the angular momentum-energy curve of circular orbits in the potential $U_0$, namely of thickness $(2\sigma)^2$, and define $f_0$ on it. This is the part $\Phi_0(E-E_c)$. For self-consistency the density generated by $f_0$ must be $\Sigma_0$, for this purpose we need also the prefactor $C_0/L_z$. Since we want only orbits that rotate counter-clockwise, we exclude all orbits with $L_z<0$.
	
	We are interested in models where all mass is on almost circular orbits. Thus we need a narrow stripe and a small parameter $\sigma$. We see in the next Lemma that the parameter $\sigma$ gives the dispersion of the tangential velocities:
	
	\begin{lem} \label{lemma velocity dispersion of model for Mestel disc} \label{lemma-d-measures-the-scatter-of-vtan}
		When $\sigma\searrow 0$ the average tangential velocity in the model $f_0$ for the Mestel disc is
		\begin{equation*}
			v_{t,avg} = v_0 + o(\sigma)
		\end{equation*}
		independent of radius. The dispersion of the tangential velocities is
		\begin{equation*}
			\sigma + o(\sigma)
		\end{equation*}
		and also independent of radius; in the rest of this paper we refer to $\sigma$ as the velocity dispersion. Further the dispersion of the radial velocities is $\sqrt 2\sigma + o(\sigma)$ and
		\begin{equation} \label{C0approximation}
			C_0 = \frac{v_0^3}{8\sqrt 2 \pi^2 G \sigma^2} + o(\sigma^{-2}).
		\end{equation}
	\end{lem}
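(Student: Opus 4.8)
The plan is to fix a radius $r>0$ and pass to the radial–tangential velocity coordinates $(v_r,v_t)$ used in the proof of the Theorem; this is a rotation of velocity space, so it has Jacobian $1$, and by \eqref{equ proof thm Mestel disc argument of Phi0} the argument of $\Phi_0$ inside $f_0$ equals $g(v_r,v_t):=\frac{v_r^2+v_t^2}{2}-v_0^2\log\frac{v_t}{v_0}-\frac{v_0^2}{2}$, which is independent of $r$. With the choice \eqref{phi-definition} of $\Phi_0$ we therefore have, at radius $r$, $f_0=\frac{C_0}{r\,v_t}\,\mathbf 1_{A_\sigma}(v_r,v_t)$ with $A_\sigma:=\{(v_r,v_t):v_t>0,\ g(v_r,v_t)<(2\sigma)^2\}$. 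Since $r$ enters only through the prefactor $C_0/r$, which cancels in every ratio of velocity moments, each of the stated quantities is automatically independent of radius and it remains to compute its behaviour as $\sigma\searrow 0$.

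First I would localize $A_\sigma$: inequality \eqref{equ proof thm Mestel disc argument of Phi0 is non-negative} gives $g(v_r,v_t)\ge\frac{v_r^2}{2}+\frac{(v_t-v_0)^2}{2}$, so $A_\sigma$ is contained in the disc of radius $2\sqrt2\,\sigma$ about $(0,v_0)$; in particular $v_t=v_0+O(\sigma)$ and $\frac{1}{v_t}=\frac{1}{v_0}\bigl(1+O(\sigma)\bigr)$ on $A_\sigma$. Next, Taylor expansion around $(v_r,v_t)=(0,v_0)$ gives $g(v_r,v_t)=\frac{v_r^2}{2}+(v_t-v_0)^2+O\bigl(|v_t-v_0|^3\bigr)$. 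Rescaling $v_r=\sigma\xi$, $v_t=v_0+\sigma\eta$ turns $\{g<(2\sigma)^2\}$ into $\bigl\{\tfrac{\xi^2}{2}+\eta^2+O(\sigma)<4\bigr\}$, so the rescaled domain $\widetilde A_\sigma$ lies in the fixed disc $\{\xi^2+\eta^2<8\}$ and $\mathbf 1_{\widetilde A_\sigma}\to\mathbf 1_E$ almost everywhere, where $E:=\bigl\{\tfrac{\xi^2}{2}+\eta^2<4\bigr\}$ is the ellipse with semi-axes $2\sqrt2$ and $2$. By dominated convergence $\int_{\widetilde A_\sigma}p\to\int_E p$ for every polynomial $p$, and an elementary computation gives $\int_E 1 = 4\sqrt2\,\pi$, $\int_E\xi=\int_E\eta=0$ by symmetry, $\int_E\eta^2=4\sqrt2\,\pi$ and $\int_E\xi^2=8\sqrt2\,\pi$.

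Now each integral reduces to one of these after the rescaling. For $C_0$: $I=\int\frac{1}{v_t}\mathbf 1_{A_\sigma}\diff v=\frac{\sigma^2}{v_0}\bigl(\int_E 1+o(1)\bigr)=\frac{4\sqrt2\,\pi\sigma^2}{v_0}+o(\sigma^2)$, and substituting into $C_0=\frac{v_0^2}{2\pi G I}$ yields \eqref{C0approximation}. For the tangential mean, the $v_t$ in $\int v_t f_0\,\diff v$ cancels the weight $\tfrac1{v_t}$, so $v_{t,avg}=|A_\sigma|/I$; writing $v_0^{-1}|A_\sigma|-I=\int\bigl(\tfrac1{v_0}-\tfrac1{v_t}\bigr)\mathbf 1_{A_\sigma}\diff v$ and expanding $\tfrac1{v_t}$ one sees this is a combination of the first moment $\int(v_t-v_0)\mathbf 1_{A_\sigma}\diff v=\sigma^3\int_{\widetilde A_\sigma}\eta=o(\sigma^3)$ (the limit vanishing by symmetry) and the second moment $\int(v_t-v_0)^2\mathbf 1_{A_\sigma}\diff v=O(\sigma^4)$, hence $v_0^{-1}|A_\sigma|-I=o(\sigma^3)$ while $I=\Theta(\sigma^2)$, giving $v_{t,avg}=v_0+o(\sigma)$. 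For the dispersions, $\langle(v_t-v_0)^2\rangle=\frac{1}{v_0}\sigma^4\bigl(\int_E\eta^2+o(1)\bigr)\big/I=\sigma^2+o(\sigma^2)$, and since $v_{t,avg}-v_0=o(\sigma)$ the tangential variance $\langle(v_t-v_{t,avg})^2\rangle$ differs from this by $(v_{t,avg}-v_0)^2=o(\sigma^2)$, so its square root is $\sigma+o(\sigma)$; similarly $\langle v_r\rangle=0$ by the $v_r\mapsto-v_r$ symmetry of $A_\sigma$ and $\langle v_r^2\rangle=\frac{1}{v_0}\sigma^4\bigl(\int_E\xi^2+o(1)\bigr)\big/I=2\sigma^2+o(\sigma^2)$, whose square root is $\sqrt2\,\sigma+o(\sigma)$.

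The main obstacle, beyond keeping track of the orders of the various error terms, is pinning $v_{t,avg}$ down to precision $o(\sigma)$ rather than merely $o(1)$: this forces me to show that the first moment of $v_t-v_0$ over $A_\sigma$ is $o(\sigma^3)$, which relies both on the $\eta\mapsto-\eta$ symmetry of the limiting ellipse $E$ and on the cubic remainder in the expansion of $g$ perturbing the region $A_\sigma$ only at one lower order in $\sigma$. The second delicate point is the uniform-in-$\sigma$ domination that justifies exchanging limit and integral after rescaling, and this is exactly what the containment of $A_\sigma$ in the disc of radius $2\sqrt2\,\sigma$ — obtained from \eqref{equ proof thm Mestel disc argument of Phi0 is non-negative} — provides.
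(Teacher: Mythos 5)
Your proposal is correct and follows essentially the same route as the paper's proof: zooming in on $(0,v_0)$ by the rescaling $v_r=\sigma\xi$, $v_t=v_0+\sigma\eta$, showing the rescaled support converges to the ellipse $\{\xi^2/2+\eta^2\le 4\}$ (the paper's $W_0$), computing the ellipse moments, and using the $\eta\mapsto-\eta$ symmetry of the limit region to push the tangential mean to precision $o(\sigma)$, exactly as the paper does via $\int_{W_0}(w_t-v_0)\,\diff w=0$. The only cosmetic differences are your explicit containment of $A_\sigma$ in the disc of radius $2\sqrt2\,\sigma$ (where the paper just invokes a compactum $K\supset W_\sigma$) and quoting standard ellipse moments instead of the paper's Beta-function computation.
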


	\begin{rem} \label{remark velocity max and min of model for Mestel disc}
		We can calculate the minimal and the maximal appearing tangential velocities $v_{t,min}$ and $v_{t,max}$ in the model $f_0$ explicitly:
		\begin{equation*}
			v_{t,min/max} = v_0 \sqrt{-W_{0/-1}\left[ -\exp\left( -\frac{8\sigma^2}{v_0^2} - 1 \right) \right]}
		\end{equation*}
		where $W_0$ and $W_{-1}$ denote the two real branches of the Lambert W function.
	\end{rem}

	The proofs of Lemma \ref{lemma velocity dispersion of model for Mestel disc} and of Remark \ref{remark velocity max and min of model for Mestel disc} are both a bit lengthy and hence we put them to the appendix.
	
	For the interested reader it may be stated that the two (equivalent) definitions \eqref{definition-f0} and \eqref{f0=Psi*Phi(E-Ec)} of $f_0$ did not appear from nowhere. By studying symmetry properties of orbits one finds the following scaling property: If $X(t)$ is an orbit in the potential $U_0$ of the Mestel disc then $Y(t)=\alpha X(t/\alpha)$ is for every $\alpha>0$ an orbit, too, i.e., both $X(t)$ and $Y(t)$ are solutions of the ODE $\dot x = v$, $\dot v = - \nabla U_0$. Searching for distribution functions that make use of this scaling property soon leads to an ansatz of the form \eqref{f0=Psi*Phi(E-Ec)}.
	
	\subsection{A cut-out Mestel disc resembling the Milky Way's ISM} \label{subsection-cutout-Mestel-disc}
	
	This is the point where we leave the Mestel disc and start to construct from it a self-consistent model with finite mass and extension. It is plausible to assume that the dynamics of a galaxy should be similar to \eqref{f0=Psi*Phi(E-Ec)} in a region where the circular velocity curve is almost flat. The Milky Way has such a flat curve between \SI{5}{\kilo\pc} and \SI{25}{\kilo\pc} from the galactic centre \citep{2019ApJ...871..120Eilers}. Luckily the Milky Way belongs also to the minority of galaxies that have a central depression of their hydrogen distribution, which makes up the bulk mass of the Milky Way's interstellar medium (ISM). Thus for the Milky Way we are in the situation that most mass of the ISM is located in the region where the circular velocity curve is flat. This makes it an ideal candidate to be modelled with a distribution function similar to \eqref{f0=Psi*Phi(E-Ec)}. This is somewhat a fortunate coincidence since on the one hand this is the easiest situation where we can deduce a finitely extended, self-consistent model from \eqref{f0=Psi*Phi(E-Ec)} and on the other hand the ISM is the part of the visible galaxy that asks us most riddles.
	
	To get from the Mestel disc to a model with finite extension, we will now drop several orbits from $f_0$. First we cut a hole into the central region. By choosing only orbits with
	\begin{equation*}
	L_z > v_0R_1
	\end{equation*}
	for some $R_1 > 0$, we drop most orbits that live within the region $0<r<R_1$. Further we want a finitely extended model, so we demand
	\begin{equation*}
	L_z < v_0R_2
	\end{equation*}
	for some $R_2 > R_1$, thus dropping most orbits that live beyond $R_2$. In what follows the cut out central hole will do just fine, but at the border $R_2$ it will we necessary to further cut out every orbit that crosses $R_2$. This is achieved by demanding
	\begin{equation*}
	E < U_0(R_2) + \frac{L_z^2}{2R_2^2}.
	\end{equation*}
	With these three cut-offs most orbits live beyond $R_1$, but there is no orbit beyond $R_2$. Our new distribution function in orbital form reads as follows:
	\begin{align} \label{f1(E,L)}
	f_1(L_z,E) := \frac{C}{L_z} \Phi_0 \left( E - E_c\left(\frac{L_z}{v_0}\right) \right)&\mathbf{1}_{\{v_0R_1<L_z<v_0R_2\}}\\
	\times&\mathbf{1}_{\{E < U_0(R_2) + L_z^2/(2R_2^2)\}} \nonumber
	\end{align}
	where $C>0$ will be determined below.
	
	We want to model the Milky Way's ISM with this distribution function. In view of the Milky Way's hydrogen distribution \citep[Figure 9.19]{1998gaas.book.....BinneyMerrifield} and its circular velocity curve \citep{2019ApJ...871..120Eilers} we choose
	\begin{equation} \label{parameters-R1-R2-v0_and_d-below}
	R_1 = \SI{4}{\kilo\pc}, R_2 = \SI{21}{\kilo\pc} \text{ and } v_0=\SI{230}{\km\per\s}.
	\end{equation}
	For simplicity we had already chosen $\Phi_0=\mathbf{1}_{[0,(2\sigma)^2]}$ in \eqref{phi-definition}. As shown in Lemma \ref{lemma-d-measures-the-scatter-of-vtan}, with this choice the velocity dispersion of the model is equal to $\sigma$. \cite{2008AJ....136.2782Leroy} calculated the velocity dispersion of atomic hydrogen in the outer regions of several  nearby spiral galaxies. Atomic hydrogen is an abundant gas in the ISM which dominates the outer parts of spiral galaxies like the Milky Way. They found that most galaxies have a dispersion of $\SI[separate-uncertainty=true]{11\pm 3}{\km\per\s}$. The concrete value, that we choose for $\sigma$, affects only little the resulting mass model, but it is important for the stability that we study in more detail in Section \ref{section-Stability-and-Spiral-Activity}. There we vary $\sigma$ and look at the different behaviour of the resulting dynamical models. For the present we fix $\sigma=\SI{11}{\km\per\s}$, and thus choose a dispersion in the middle of the measurements of Leroy et al.
	
	Further we will in the following smooth out the integral kernel of the gradient of the gravitational potential to take into account the observed thickness of the ISM's disc. We assume a constant scale height\footnote{According to \cite{2001RvMP...73.1031Ferriere} most gas of the ISM is cold and warm atomic hydrogen with scale heights between \SI{100}{\pc} and \SI{400}{\pc}, followed by molecular hydrogen with scale heights between \SI{120}{\pc} and \SI{140}{\pc}. If we choose another scale height, e.g., $z_g=\SI{100}{\pc}$ this does hardly change the properties of the resulting model.} $z_g = \SI{300}{\pc}$ for the ISM. Nevertheless, we still define the density $\Sigma_f$ on the planar space $\R^2$, but we replace the gradient of the gravitational potential \eqref{Vlasov-Poisson system eq 2} by
	\begin{equation} \label{gradUf-with-scale-height}
	\nabla U_f(x) := G\int_{\R^2} \Sigma_f(y)\frac{x-y}{(|x-y|^2 + \delta_z^2)^{3/2}};
	\end{equation}
	here $\delta_z = 1.5z_g$ is the average distance in $z$-direction when we draw two test particles at random from the spatial density
	\begin{equation*}
	\Sigma_f(x_1,x_2)\exp\left( -|z|/z_g \right).
	\end{equation*}
	We give a proof of the relation $\delta_z=1.5z_g$ in the appendix.
	
	\begin{figure}
		\begin{center}
			\input{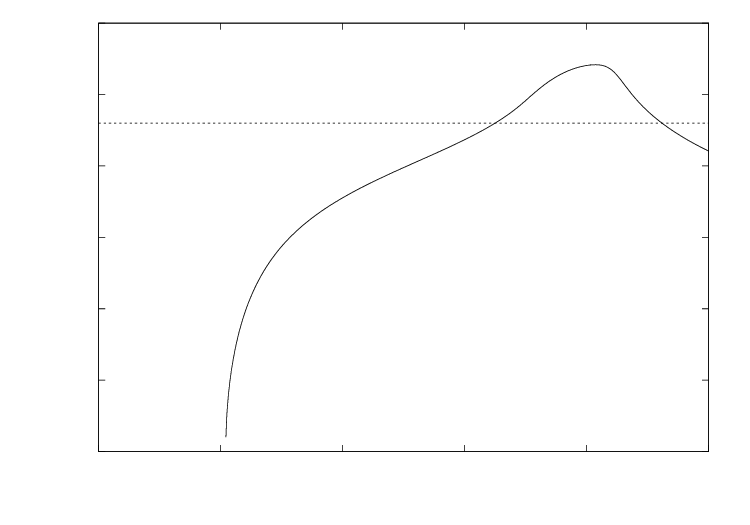}
		\end{center}
		\caption{The circular velocity curve (solid line) that corresponds to the cut-out model $f_1$ if we choose the weight $C\approx C_0$, and the original, constant circular velocity curve (dashed line) that corresponds to the Mestel disc. The circular velocity curve of the cut-out model is no longer flat. To make it flat again, we need to include bulge and stellar disc in our model, too, and reduce the mass of our cut-out Mestel disc, which represents the ISM.}
		\label{figure-Sigma1-and-corresponding-vc}
	\end{figure}
	
	Let us continue with the above parameters. From $f_1(L_z,E)$ we get $f_1(x,v)$ in Cartesian coordinates by replacing
	\begin{equation*}
	E=U_0(r)+\frac{|v|^2}{2}\quad\text{and}\quad L_z=x_1v_2-x_2v_1.
	\end{equation*}
	Then we can calculate numerically the density $\Sigma_1(r)=\int f_1 \diff v$. Further, we calculate the corresponding potential and the circular velocity curve. The circular velocity curve that corresponds to $f_1$ is shown in Figure \ref{figure-Sigma1-and-corresponding-vc}, where we set 
	\begin{equation*}
	C=\SI{6.6e24}{\Msol\per\s}\approx C_0
	\end{equation*}
	according to the approximation \eqref{C0approximation}. Obviously the circular velocity curve is no longer flat. Mainly the force generated by the central mass is missing to support a flat circular velocity curve in the region between $R_1$ and $R_2$. This missing mass has to be 'replaced' by the bulge and the stellar disc which we implement as rigid components. We implement the bulge as spherically symmetric and since it only extends out to approximately $\SI{1.9}{\kpc}<R_1$ \citep[§2.7(a)]{2008gady.book.....BinneyTremaine} its actual shape\footnote{There is evidence that the Milky Way's bulge is in fact bar shaped and not spherical and extends out to radius \SI{3.5}{\kpc} \citep{2002MNRAS.330..591Bissantz}. But as discussed by \cite[§2.7(e)]{2008gady.book.....BinneyTremaine} such a bar would have only a small impact on the dynamics beyond \SI{5}{\kpc} where most of the dynamical mass in our model is located. In this paper we do not discuss this issue in more detail. For the model that we construct and analyse we keep a spherical bulge.} does not affect our model and we take for simplicity
	\begin{equation*}
	\rho_b(x) := A \left(1 - \frac{|x|}{\SI{1.9}{\kpc}}\right) \quad \text{for }x\in\R^3 \text{ and } |x|\leq \SI{1.9}{\kpc}.
	\end{equation*}

	For the stellar disc we assume a scale length $R_d =\SI{3.2}{\kpc}$ \citep[compare Model 2 from ][§2.7]{2008gady.book.....BinneyTremaine} and define the density
	\begin{equation*}
	\Sigma_d(r) = B \exp\left(-\frac{r}{R_d}\right)\quad\text{for }r>0.
	\end{equation*}
	For the stellar disc we assume a disc thickness of $\SI{500}{\pc}$\footnote{\cite{2008gady.book.....BinneyTremaine} included two stellar discs that have similar properties. One has a scale height \SI{300}{\pc} and the other \SI{1000}{pc}. We consider one disc with average parameters and set the scale height to \SI{500}{pc}} and smooth out the gradient of its gravitational potential as in \eqref{gradUf-with-scale-height}.
	
	We have now three components, namely bulge, stellar disc and ISM with the free parameters $A,B,C>0$. We fit these parts together such that our model reproduces the observed circular velocity curve of the Milky Way as closely as possible. This curve was measured in \cite{2019ApJ...871..120Eilers} and we refer to it as $v_{c,MW}$. We choose $A_1,B_1,C_1>0$ such that
	\begin{equation} \label{minimize-vc2-vb2-vd2-vg2}
	\int_{\SI{5}{\kpc}}^{\SI{25}{\kpc}} \left( v_{c,MW}^2 - v_{b,1}^2 - v_{d,1}^2 - v_{g,1}^2 \right)^2 \diff r
	\end{equation}
	becomes minimal; here $v_{b,1}$, $v_{d,1}$ and $v_{g,1}$ denote the circular velocity curve of the bulge, the stellar disc and the gaseous component ISM respectively where we have replaced $A,B,C$ by $A_1,B_1,C_1$. The integral borders \SI{5}{\kpc} and \SI{25}{\kpc} are the lower and upper border of the range covered by \citeauthor{2019ApJ...871..120Eilers} and we treat $v_{c,MW}$ as a piece-wise linear function. Calculating numerically the optimal parameters we get
	\begin{align*}
	A_1 &= \SI{2.7}{\Msol\per\cubic\pc} \\
	B_1 &= \SI{1300}{\Msol\per\square\pc} \\
	C_1 &= \SI{1.7e24}{\Msol\per\s} \approx 0.26 C_0.
	\end{align*}
	With these parameters fixed we can calculate the circular velocity curve of our model
	\begin{equation*}
	v_{c,1} = \sqrt{v_{b,1}^2 + v_{d,1}^2 + v_{g,1}^2}.
	\end{equation*}
	In Figure \ref{figure-vc1} both $v_{c,1}$ and the measured curve of the Milky Way $v_{c,MW}$ are shown. As can be seen, this is already quite a good fit. Nevertheless the model is not self-consistent yet, because the potential that belongs to the mass model is different from the potential that we assumed for the dynamics.
	
	\begin{figure}
		\begin{center}
			\input{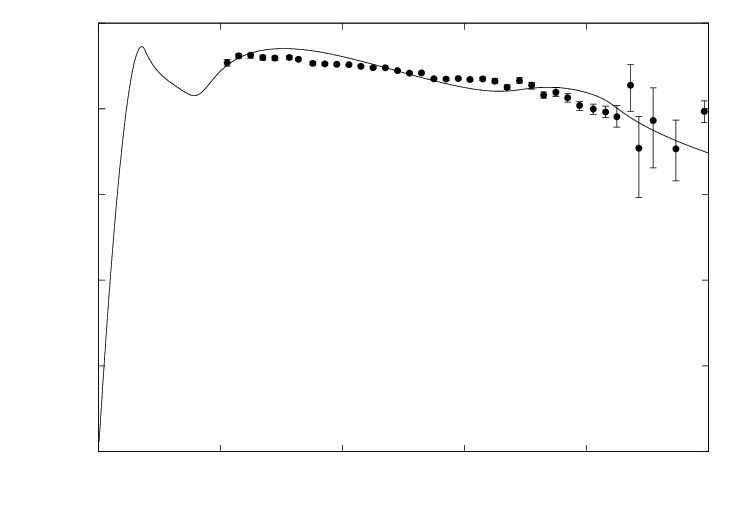}
		\end{center}
		\caption{The resulting circular velocity curve (solid line) after we have also included a bulge and a stellar disc and after we have weighted the three baryonic components optimally. In this model the ISM is represented by $f_1$ with $C =C_1\approx 0.26 C_0$. Its dynamics are not self-consistent yet. The dots mark the circular velocity curve of the Milky Way as measured by \protect\cite{2019ApJ...871..120Eilers}, which we try to approximate with our model.} 
		\label{figure-vc1}
	\end{figure}

	At this point one might object that there is no (non-baryonic) dark matter in our model. We did not forget the dark matter, we omitted it intentionally. One should not be upset about this, one should rather be surprised that what we have done is possible at all: Instead of introducing dark matter, we used higher values for $A_1$, $B_1$ and $C_1$ than current models do. As you can see in Figure \ref{figure-vc1} this way, too, it is possible to explain the Milky Way's circular velocity curve. We discuss the physical implications of this  model without non-baryonic dark matter in detail in the Sections \ref{section-Comparision-with-other-mass-models} and \ref{section-Stability-and-Spiral-Activity}. But for the moment I ask you that we postpone these discussions there, so that first we can finish the mathematics: How do we have to modify the model further so that the dynamics become self-consistent?

	\subsection{A Model of the Milky Way with self-consistent dynamics for the ISM} \label{subsection-algorithm}
	
	Recall how in the previous section the dynamical part of our model, the ISM, was constructed. The ISM is located mostly between $R_1$ and $R_2$ where the circular velocity curve is almost flat. We thought about how a distribution function in such a region should look like and in \eqref{f1(E,L)} we defined $f_1$ under the assumption of a logarithmic potential that gives rise to an exactly flat rotation curve everywhere. After adding a bulge and a stellar disc to the model, the resulting circular velocity curve has now some bumps and is slightly decaying in the relevant region between $R_1$ and $R_2$ (Figure \ref{figure-vc1}), but it is almost flat. So the initially assumed logarithmic potential is close to the resulting potential, and thus $f_1$ is also close to a self-consistent model. We want to iterate what we have done so far and use the following algorithm to construct a self-consistent model:
	
	\filbreak
	\paragraph*{Algorithm}
	for the construction of a model of the Milky Way where the ISM is equipped with self-consistent dynamics
	\begin{enumerate}
		\item Given a gravitational potential $U_i(r)$. Calculate 
		\begin{align*}
		v_{c,i}(r) &= \sqrt{r U_i'(r)},\\
		L_{c,i}(r) &= rv_{c,i}(r), \\
		E_{c,i}(r) &= U_i(r) + v_{c,i}^2(r)/2
		\end{align*}
		for $R_1 < r < R_2$
		\item Choose $R_2'<R_2$ maximal such that $L_{c,i}$ is strictly increasing on $[R_1,R_2']$ \footnote{This is necessary because the disc of the ISM is truncated at $R_2$. As a result $v_{c,i}$ is decaying very rapidly near $R_2$. This is an effect due to the flatness of the disc. As a result $L_{c,i}$ is not monotonous in this region.} and define the inverse map of $L_{c,i}(r)$:
		\begin{equation*}
		[L_{c,i}(R_1), L_{c,i}(R_2')] \ni L_z \mapsto r_c(L_z)
		\end{equation*}
		(hence $r_c(L_z)$ is the radius where a test particle with angular momentum $L_z$ is on a circular orbit)
		\item \label{algorithm-step-definition-fi} Define
		\begin{align*}
		f_{i+1}(L_z,E) := \frac{C}{L_z} \Phi_0 \left( E - E_c(r_c(L_z)) \right)&\mathbf{1}_{\{L_{c,i}(R_1)<L_z<L_{c,i}(R_2')\}} \\
		\times&\mathbf{1}_{\{E < U_0(R_2') + L_z^2/(2R_2'^2)\}} \nonumber
		\end{align*}
		\item Replace
		\begin{equation*}
		E = U_i(x) + \frac{|v|^2}{2} \text{ and } L_z = x_1v_2-x_2v_1
		\end{equation*}
		and calculate the flat density $\Sigma_{i+1} = \int f_{i+1} \diff v$ and the corresponding potential and circular velocity curve $v_{g,i+1}$
		\item Replace $A,B,C$ by $A_{i+1},B_{i+1},C_{i+1}$ and choose them such that
		\begin{equation*}
		\int_{5kpc}^{25kpc} \left( v_{c,MW}^2 - v_{b,i+1}^2 - v_{d,i+1}^2 - v_{g,i+1}^2 \right)^2 \diff r
		\end{equation*}
		is minimal
		\item Calculate the total potential $U_{i+1}(r)$ of all three baryonic components and return to the first step
		
	\end{enumerate}
	
	To measure the convergence of our algorithm we look at
	\begin{equation*}
	\delta_i := \frac{||\Sigma_{i+1} - \Sigma_i||_2}{||\Sigma_i||_2}
	\end{equation*}
	where $||\cdot||_2$ denotes the $L^2$-Norm on $\R^2$. With the parameters chosen in \eqref{parameters-R1-R2-v0_and_d-below}, $\delta_1\approx 0.10$, $\delta_i$ decreases in each iteration step roughly by a factor between 0.5 and 0.7, and we stop the algorithm after twelve iterations when $\delta_i<0.001$. The resulting distribution function $f$ is a self-consistent model for the Milky Way's ISM where the bulge and the stellar disc are rigid components. Or more precisely, it is as close to a self-consistent model as possible: We cannot distinguish it anymore from a self-consistent model on a computer.
	
	\begin{figure}
		\begin{center}
			\input{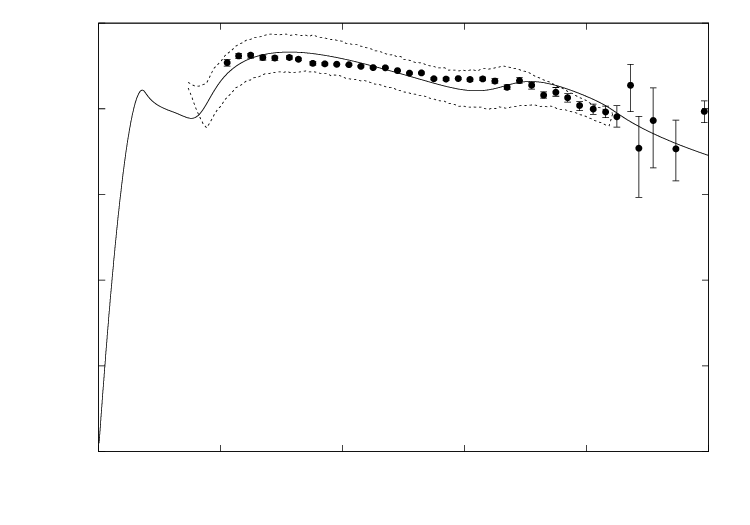}
		\end{center}
		\caption{The circular velocity curve (solid line) of our final model. For comparison we show also the circular velocity curve of the Milky Way (thick dots) measured by \protect\cite{2019ApJ...871..120Eilers}. The velocity dispersion (dashed lines) is shown. In our model now the ISM is equipped with self-consistent dynamics.}
		\label{figure-vcFinal-and-vtanTube-and-vcMW}
	\end{figure}
	
	The circular velocity curve and the velocity dispersion in our model are shown in Figure \ref{figure-vcFinal-and-vtanTube-and-vcMW}. The circular velocity curve and the low velocity dispersion resemble quite well the properties of the Milky Way. This is very nice, but with this observation alone we can not be satisfied yet. We have to cover two more very relevant topics: As already mentioned we have to discuss the lack of non-baryonic dark matter in our model (Section \ref{section-Comparision-with-other-mass-models}). Further we have to analyse the stability of our model (Section \ref{section-Stability-and-Spiral-Activity})? These two topics are closely coupled.
	
	\subsection{A self-consistent, multiphase model of the entire galaxy} \label{subsection-multiphase-models}
	
	Before in the next sections we we study the physical implications of our model, let us take a closer look on the algorithm itself. The algorithm of the previous section is very powerful since it is highly customizable. It can easily be extended to create a multiphase, self-consistent model of the entire galaxy. This is a task that was elusive up to now; see, e.g., the review from \cite{2020IAUS..353..101Binney}.
	
	How can the algorithm from Section \ref{subsection-algorithm} be extended to construct a multiphase, self-consistent model for the entire galaxy? Look for this on Step \ref{algorithm-step-definition-fi} in the algorithm where we define our distribution function. In this definition we have a prefactor $C/L_z$. This prefactor was motivated from the Mestel disc where it was necessary for self-consistency (see equation \eqref{f0=Psi*Phi(E-Ec)}). But in our finitely extended model this is no longer necessary. There the algorithm takes care that the dynamics become self-consistent. So we can replace $C/L_z$ by any suitable function $\psi(L_z)$. Further we can add to our model as many different distribution functions as we want and choose for every one a different prefactor $\psi$. Each of these new distribution functions has to be updated in Step \ref{algorithm-step-definition-fi} of the algorithm.
	
	It could for example be convenient to decompose the ISM into atomic hydrogen HI and molecular hydrogen H2. This can easily be achieved by including two distribution function $f_{HI}$ and $f_{H2}$ with suitable prefactors $\psi_{HI}$ and $\psi_{H2}$.
	
	In the same way we can equip a stellar disc with dynamics. But it is important to note that a stellar disc must be defined also in regions where the circular velocity curve is not flat but rising. So one has to study first how a suitable ansatz in these regions looks like.
	
	It is also possible to add a three-dimensional, dynamical dark matter halo with self-consistent dynamics to the model (adding a rigid halo is trivial). The idea behind this is as follows: First we have to estimate the densities the halo and the disc components shall have in the final model. Then we can approximate the potential of this (rigid) galaxy model with a spherical potential using equation \eqref{U0-spherical-approximation} for the disk components. Using Eddington inversion \citep{2008gady.book.....BinneyTremaine} we determine a distribution function $f_{h}(E)$ for the halo. If the approximation \eqref{U0-spherical-approximation} had been correct, the dynamics of this halo would be self-consistent. However the approximation \eqref{U0-spherical-approximation} is in general not exact. But with the algorithm from the previous section we can nevertheless make the dynamics of the halo self-consistent easily. All we have to do is to replace the energy in Step 4 of the algorithm by its three-dimensional analogon
	\begin{equation*}
		E(x,z) = U_i(x,z) + |v|^2
	\end{equation*}
	where $U_i(x,z)$, $x\in\R^2$, $z\in\R$, is the three dimensional potential that is created by the halo together with the disk components. It is not necessary to modify the halo distribution function in the Steps 1, 2 and 3. In Step 5 the halo gets weighted as the other galaxy components. This modified algorithm will converge to a model where all components have self-consistent dynamics. Since the approximation \eqref{U0-spherical-approximation} is in general not bad, the density of the resulting halo will be close to our initial estimate.
	
	\section{Comparing our mass model with observational data} \label{section-Comparision-with-other-mass-models}
	
	In the introduction we posed the question:
	\begin{enumerate}
		\item[] Is a halo of non-baryonic, dark matter necessary to explain the Milky Way's flat circular velocity curve?
	\end{enumerate}
	In the previous section we have constructed a model that explains the Milky Way's flat circular velocity curve out to \SI{25}{\kpc}. The disc of this model has an extension of only \SI{21}{\kpc}, there is no large halo of non-baryonic, dark matter involved and the densities of bulge, stellar disc and ISM conform to observations up to a prefactor. Nevertheless, to explain the circular velocity curve a certain amount of mass is needed that creates the necessary gravitational potential. Since most current models make use of a dark matter halo, while our model does not, our model needs more baryonic mass than current models and a mass gap occurs. In this section we study how large this mass gap is by comparing our model with other models from the literature.

	\subsection{Comparing our mass model with the one in \protect\cite{2008gady.book.....BinneyTremaine}} \label{subsection-comparision-with-BT}
	
	\begin{table}
		\begin{center}
			\begin{tabular}{lccc}
				\hline
				& \BT & Here & Factor \\
				& $10^{10}\,\si{\Msol}$ & $10^{10}\,\si{\Msol}$&\\
				\hline
				Total	& 21.5 & 15.0 & 0.7 \\
				Total baryonic 	& 4.3 & 15.0 & 3.5 \\
				\hline
				Bulge			& 0.36 & 1.4 & 3.9 \\
				Stellar disc	& 3.0 & 8.5 & 2.8 \\
				ISM 			& 1.0 & 5.1 & 5.1 \\
				Dark matter		& 17.1 & - & - \\
				\hline
			\end{tabular}
		\end{center}
		\caption{Masses of the different components of the Milky Way. The first column contains the masses of Model 2 in Binney and Tremaine (\BT), the second one the masses of our self-consistent model (Here), and the third column (Factor) states by how much the mass of \BT\ must be multiplied to match the mass of our model. Only masses within \SI{25}{\kpc} were included. In our model all mass lies within \SI{25}{\kpc}. In \BT\ a high amount of mass, in particular dark matter, lies beyond \SI{25}{\kpc}; this mass we ignore in our comparison. (The masses from \BT\ were calculated using central surface densities of $\SI{463}{\Msol\per\square\pc}$ and $\SI{73}{\Msol\per\square\pc}$ for the stellar disc and the ISM respectively)}
		\label{table-Mass-in-BT-and-Here}
	\end{table}
	
	In §2.7 of \cite{2008gady.book.....BinneyTremaine} (hereafter referenced as \BT) two mass models for the Milky Way were constructed. Their Model 2 assumes a stellar disc with a scale length $R_d = \SI{3.2}{\kpc}$ as we did in Section \ref{section-Mestel-disc-to-Milky-Way-model}. Further they also included a bulge with similar properties as our bulge and their model for the ISM is based on the observational data from \cite{1998gaas.book.....BinneyMerrifield} that we used too. Therefore we compare Model 2 from \BT\ with our model.
	
	The masses of the Milky Way's components in the model of \BT\ and in our model are listed in Table \ref{table-Mass-in-BT-and-Here}. All mass in our model is confined to a disc with radius \SI{21}{\kpc} and we explain with this mass the circular velocity curve of the Milky Way out to \SI{25}{\kpc}. In the model of \BT\ much mass (mostly dark matter) lies beyond the edge of the visible galaxy. To compare the two models properly, we list therefore in the table only masses within a ball with radius \SI{25}{\kpc}.
	
	We see that the total mass of our model and the one of the model in \BT\ take similar values. This is to be expected because both models must explain the same circular velocity curve and for this similar amounts of mass are necessary. Nevertheless, the total mass of our model is about 30 per cent lower than the mass in \BT. Since in the model of \BT\ only one quarter of the mass is baryonic while in our model all mass is baryonic, necessarily a mass gap arises. Our model needs 3.5 times as much baryonic mass as the model in \BT. The missing mass distributes almost uniformly over the three baryonic components of our model. We have to multiply the masses of the three baryonic components in our model by factors between 2.8 and 5.1 to reproduce the masses in \BT.

	\subsection{Our model for the Milky Way as an example for the Bosma effect} \label{subsection-Bosma}
	
	That it is possible to explain circular velocity curves of spiral galaxies by scaling the gaseous content of these galaxies was already noticed by \cite{1981AJ.....86.1825Bosma}. When he measured the densities and dynamics in the outer parts of spiral galaxies, he calculated the disc density necessary to explain the observed circular velocity curve and compared it to the observed density of the gas. He noticed that as a rule the ratio of the two is roughly constant in the outer parts of the galaxies in his sample. This phenomenon is called the Bosma effect. \cite{2011A&A...532A.121Hessman} used this phenomenon to explain the circular velocity curves of 17 galaxies from The Nearby HI Galaxy Survey (THINGS) without invoking non-baryonic, dark matter. They scaled the observed densities of both the stellar and the gaseous discs like we did for the Milky Way in Sections \ref{subsection-cutout-Mestel-disc} and \ref{subsection-algorithm}. They interpreted the scaled stellar and the scaled gaseous discs as proxies for other -- presumedly non-stellar -- mass components that reside in the disc and have not been observed, yet. They found good agreement between observed and predicted circular velocity curves. We have summarized their scaling factors for the stellar and the gaseous discs\footnote{In \cite{2011A&A...532A.121Hessman} the scaling factors for the stellar discs are given by $1+f_{\text{disc}}$ and the scaling factors for the gaseous discs by $1+f_{\text{HI}}$. The values for $f_{\text{disc}}$ and $f_{\text{HI}}$ are tabulated in their paper. Since the gaseous discs of \citeauthor{2011A&A...532A.121Hessman} contained only atomic hydrogen HI, \citeauthor{2011A&A...532A.121Hessman} multiplied their gaseous discs with an additional factor of 1.39 to correct the disc densities for the presence of Helium and heavier elements. We have not invoked such an additional factor since the model for the ISM from  \cite{2008gady.book.....BinneyTremaine} includes already similar corrections.} in Figure \ref{figure-histogram-factors}.
	
	\begin{figure}
		\begin{center}
			\include{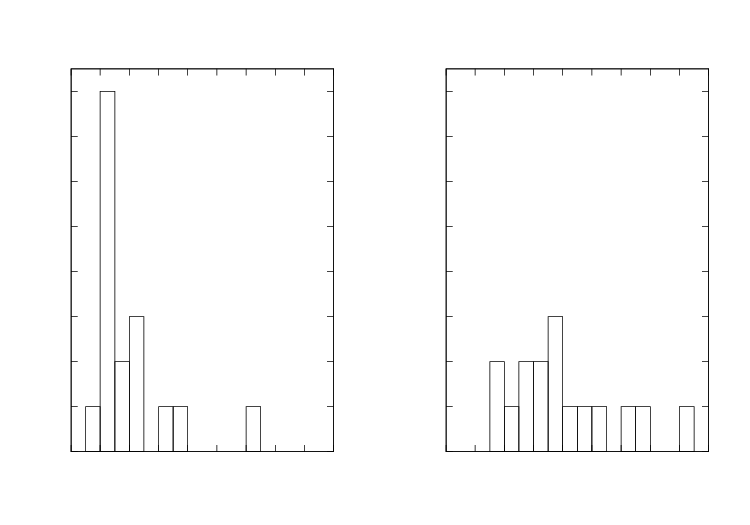}
		\end{center}
		\caption{Scaling factors used by \cite{2011A&A...532A.121Hessman} for the stellar and the HI discs of 17 spiral galaxies from the THINGS sample. One outlier (scaling factor 28.0 for the HI content) is not displayed in the right histogram. With the such scaled discs \citeauthor{2011A&A...532A.121Hessman} can explain the circular velocity curves of the respective galaxies without invoking non-baryonic, dark matter. Our factors 2.8 for the stellar and 5.1 for the gaseous disc of the Milky Way fall in the midst of their factors.}
		\label{figure-histogram-factors}
	\end{figure}
	
	Both \BT\ and we (Section \ref{subsection-cutout-Mestel-disc}) took a functional form for the ISM that is similar to the observed density of atomic plus molecular hydrogen reported by \cite{1998gaas.book.....BinneyMerrifield}. Further both \BT\ and we took the same functional form for the stellar disc. In our model the stellar disc is 2.8 times  and the ISM 5.1 times more massive than the corresponding discs from \BT. These two factors fall in the midst of the factors from \cite{2011A&A...532A.121Hessman} making our model for the Milky Way a typical example for the Bosma effect.
	
	\subsection{Higher densities of the ISM measured by the Voyager probes} \label{subsection-voyager-probes}
	
	An argument that the density of the ISM could indeed be higher than currently assumed is provided by the Voyager 1 and 2 probes which in 2012 and 2018 left the heliosphere and entered the interstellar medium. They are the first artificial objects to do so. Inside the heliosphere the electron density is very low (about 0.001 particles per \si{\cubic\cm}). In the interstellar medium current models predict a higher value of about 0.04 particles per \si{\cubic\cm} \citep{1993Sci...262..199GurnettKurthEtAl}. Measurements carried out by the space probes motivated \citeauthor{1993Sci...262..199GurnettKurthEtAl} already in 1993 to postulate that there must be a 'pill up' region in front of the heliospheric nose where the electron density is higher than the predicted value. The Voyager 1 and 2 probes entered the ISM far off the region where the pile up was expected to be \citep{2020ApJ...900L...1KurthGurnett}. First they measured an electron density of \SI{0.04}{\per\cubic\cm} and \SI{0.05}{\per\cubic\cm} close to the estimate mentioned above. But after travelling \SI{20}{\AU} more this density rose to \SI{0.13}{\per\cubic\cm} and \SI{0.12}{\per\cubic\cm}. Roughly three times higher than expected. Since the two space probes entered the interstellar medium at different positions, Gurnett and Kurth expect that this high electron density is a large scale feature that can be found everywhere in the direction of the heliospheric nose.
	
	Gurnett and Kurth discussed some possible explanations for this high density but concluded that the question of its origin cannot be answered satisfactory. We would like to add another possible explanation to their list: Could it be that this high electron density is not just a local phenomenon near the heliospheric nose, but that is is real? Meaning that the electron density is indeed higher than expected everywhere in the Milky Way and that this points toward a higher density of the whole interstellar medium, consistent with our model?
	
	\section{Stability, Spiral Structure and Velocity Dispersion} \label{section-Stability-and-Spiral-Activity}
	
	Let us pose the question: Is our model stable? The answer to this question is: No, it is unstable. It can suffer from two instabilities. And this good. Because these instabilities take care that our model offers simple answers to the two questions from the introduction:
	\begin{samepage}
		\begin{enumerate}
			\item[] Where does the four-arm spiral pattern\footnote{Due to our position inside the Milky Way's disk it is difficult to determine the exact structure of the Milky Way's spiral arms (see, e.g., the discussion in the introduction of \cite{2021A&A...651A.104Poggio}). However, at the moment a four-arm spiral structure is the most accurate visualization of what the Milky Way looks like \citep{2020RAA....20..159ShenZheng,2019ApJ...885..131Reid,2010ApJ...722.1460SteimanCameron}.} in the Milky Way's ISM originate from?
			\item[] Why does atomic hydrogen have in most spiral galaxies the same velocity dispersion well above the value expected from thermal considerations?
		\end{enumerate}
	\end{samepage}
	For these answers it is important that the ISM mass in our model is as high as we have seen in the previous section. Models that use an ISM disc with a lower mass embedded in a halo of non-baryonic, dark matter cannot answer these questions as easily as our model does (Section \ref{subsection-DM-cannot-explain-spirals-and-dispersion}).
	
	In this section we show several simulations where the initial particles were drawn at random from the distribution function constructed in Section \ref{subsection-algorithm} and where the equations of motion were integrated numerically. In Section \ref{subsection-spiral-activity}, where we study the spiral activity, we choose in our model $\sigma=\SI{11}{\km\per\s}$ and in Section \ref{subsection-Jeans-instability}, where we study the Jeans instability, we look on a model with $\sigma=\SI{9}{\km\per\s}$. Details on our numerical methods can be found in the appendix.
	
	\subsection{Spiral structure in our model and in the Milky Way} \label{subsection-spiral-activity}
	
	\begin{figure}
		\begin{center}
			\include{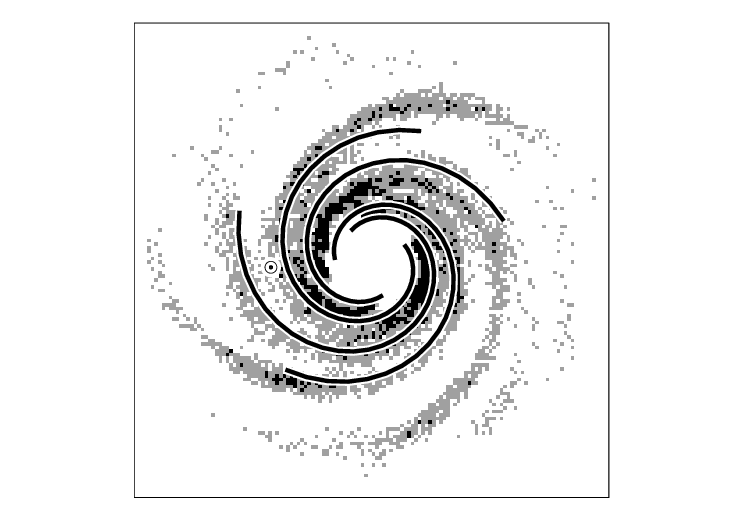}
		\end{center}
		\caption{Spiral structure (Background image) that has formed in our model within \SI{400}{\Myr} out of the initially axially symmetric disc, overlaid with the four spiral arms that were calculated from observational data by \protect\cite{2010ApJ...722.1460SteimanCameron} (solid lines); the position of the sun is marked by $\odot$. Given that the spiral arms in our model form spontaneously, the similarity between them and the observed arms is astonishing. The plate covers \SI{40}{\kpc} x \SI{40}{\kpc}. In the background image black corresponds to densities above \SI{120}{\Msol\per\square\pc}, gray to densities above \SI{60}{\Msol\per\square\pc}, and white to densities below.}
		\label{figure-spiral-arms-Milky-Way}
	\end{figure}
	
	\begin{figure}
		\begin{center}
			\input{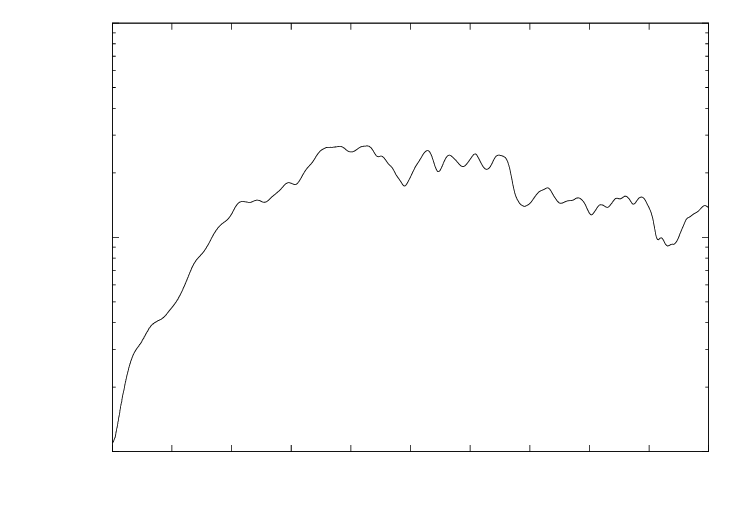}
		\end{center}
		\caption{The RMS of the tangential accelerations is very low initially but grows exponentially for about \SI{400}{\Myr} (take note of the logarithmic scale for the y-axis). It still continues to grow for \SI{300}{\Myr} more at a slower pace and then it decays gradually. The growth of the tangential accelerations corresponds to local overdensities which become denser and denser. These overdensities result in a spiral structure that resembles very well the observed spiral structure of the Milky Way's ISM (see Figure \ref{figure-spiral-arms-Milky-Way}).}
		\label{figure-RMS-of-atan}
	\end{figure}
	
	The first instability our model suffers from offers a simple explanation for the observed four-arm spiral pattern in the Milky Way's ISM (see Figure \ref{figure-spiral-arms-Milky-Way}). To understand what happens there, let us take a look on the tangential accelerations. In axial symmetry, these accelerations would be zero. However, in our simulation the particles were drawn at random from the distribution function and hence these accelerations are different from zero although they are very small initially. If now we run the simulation, these tangential accelerations grow exponentially. This can be seen very well in Figure \ref{figure-RMS-of-atan} where we have plotted the root mean square (RMS) of the tangential accelerations as a function of time:
	\begin{equation*}
	RMS(a_{tan}) = \sqrt{\sum_i \left(\frac{x_{i,1} a_{i,2} - x_{i,2} a_{i,1}}{r_i}\right)^2},
	\end{equation*}
	where $a = \dot{v}$ and the sum ranges over all particles in the simulation. These growing tangential accelerations correspond to local overdensities which become denser and denser and result in a spiral structure with four large spiral arms that match the observed spiral arms in the Milky Way's ISM to a high degree; in Figure \ref{figure-spiral-arms-Milky-Way} we have overlaid the spiral structure of our model with the four-arm spiral structure that was observed by \cite{2010ApJ...722.1460SteimanCameron}. Given that the spiral structure in our model forms spontaneously, the similarity is astonishing.

	\citeauthor{2010ApJ...722.1460SteimanCameron} discuss several possible explanations why the Milky Way's ISM has a four-arm spiral structure. Our model offers a further explanation: Assuming the mass of the ISM is as high as in our model, then spiral activity is self-excited, it is independent from the dynamical properties of the rest of the galaxy and it gives rise to the spiral pattern that is observed in the Milky Way.
	
	\subsection{Reliability of our simulation} \label{subsection-cloud-cloud-collsions}
	
	\begin{figure}
		\begin{center}
			\input{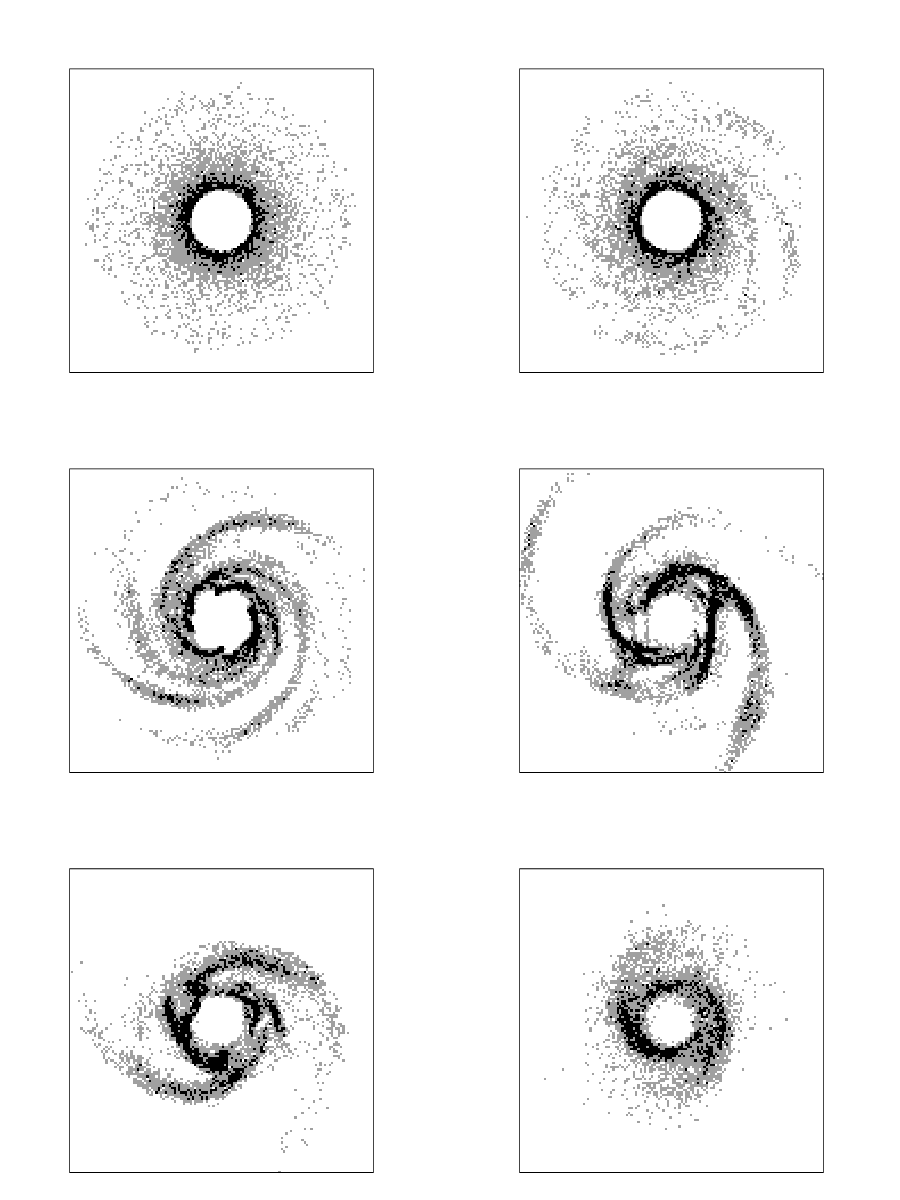}
		\end{center}
		\caption{A simulation of our self-consistent model where the initial particles are drawn at random. After about \SI{200}{\Myr} we can make out a faint multi-arm spiral structure. These spiral arms merge and after \SI{400}{\Myr} form four large spiral arms, which resemble the Milky Way's spiral arms very well. As time continues these arms move, become elongated due to the differential rotation and disrupt and merge with each other. After \SI{800}{\Myr} we still see a three-arm, and after \SI{1.2}{\Gyr} a two-arm spiral structure. Afterwards the spiral activity calms down. After \SI{3}{\Gyr} still a weak, bi-symmetric structure is visible. As the simulation continues, the system converges more and more to a new axially symmetric state. That the spiral activity vanishes during the simulation is due to the increasing velocity dispersion reported in Figure \ref{figure-vtan-400My-1400My}. Each plate covers \SI{40}{\kpc} x \SI{40}{\kpc} and the colour scheme is the same as in Figure \ref{figure-spiral-arms-Milky-Way}. A more detailed video of this simulation can be found on the project homepage: \url{https://www.diffgleichg.uni-bayreuth.de/en/research/spiral-galaxies/index.html}}
	\end{figure}
	In our simulation the spiral activity does not last forever (Figure \ref{figure-plates-for-simulation}). The reason for this is the velocity dispersion that starts to rise at \SI{400}{\Myr} (Figure \ref{figure-vtan-400My-1400My}). This rising velocity dispersion stabilizes the disc, weakens the spiral activity and is responsible that the simulation converges to a new axially symmetric state.
	
	\begin{figure}
		\begin{center}
			\input{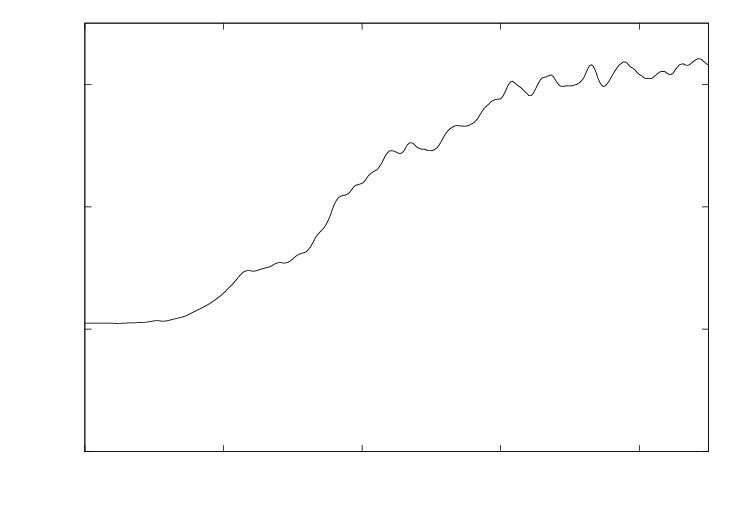}
		\end{center}
		\caption{The time evolution of the mean velocity dispersion in our model between radius \SI{5}{\kpc} and \SI{15}{\kpc}. For the first \SI{400}{\Myr} the velocity dispersion stays at $\sigma = \SI{11}{\km\per\s}$. Between \SI{400}{\Myr} and \SI{1200}{\Myr} the velocity dispersion rises at a constant rate. Afterwards it settles down around \SI{32}{\km\per\s}. Thus it is justified to neglect collisional effects in our simulation till \SI{400}{\Myr}. This is the moment when the spiral structure that we analysed in Figure \ref{figure-spiral-arms-Milky-Way} has manifested.}
		\label{figure-vtan-400My-1400My}
	\end{figure}
	
	In the Milky Way this seems not to happen - and in other spiral galaxies neither. For comparison the Milky Way's stellar disc is assumed to be \SI[separate-uncertainty=true]{8.8\pm 1.7}{\Gyr} old \citep{2005A&A...440.1153DelPelosoEtAl}. Nevertheless, we can observe nearly everywhere in the Milky Way's ISM the same low dispersion of velocities \citep{2017A&A...607A.106Marasco} and also the spiral activity seems to last forever. Why?
	
	In the real galaxy the velocity dispersion of the gas becomes permanently reduced. The Milky Way's gaseous mass in the ISM is not distributed homogeneously but it is concentrated in large clouds. As long as these clouds would move on circular orbits, everything would be fine. But when they become deviated and the velocity dispersion rises, the orbits of these clouds intersect and they collide. In these collisions they loose the radial component of their velocity and continue on on circular orbits. Thus the velocity dispersion decreases again. This keeps the velocity dispersion low and enables a long lasting spiral activity  \citep[§6.1.]{2021arXiv211005615SellwoodMasters}.
	
	We did not model collisional effects in our simulation. Thus the results from our simulation can only be transferred to the real galaxy as long as the velocity dispersion does not rise and collisional effects can be neglected. As can be seen from Figure \ref{figure-vtan-400My-1400My} this is the case for the first \SI{400}{\Myr}. Thus our simulation is reliable until the moment when the four-arm spiral structure, which we discussed in Figure \ref{figure-spiral-arms-Milky-Way}, has manifested.
	
	\subsection{Velocity dispersion and the Jeans instability} \label{subsection-Jeans-instability}
	
	\begin{figure}
		\begin{center}
			\input{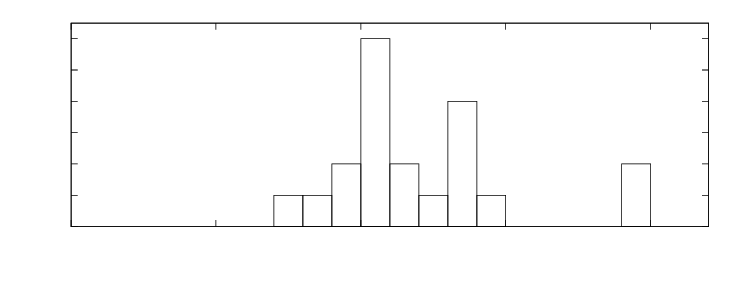}
		\end{center}
		\caption{Histogram of the velocity dispersion of atomic hydrogen in the outer parts of the galactic discs of 20 nearby spiral galaxies. The sample is taken from \protect\cite{2008AJ....136.2782Leroy} and only galaxies with an inclination below 60° were included. Above an inclination of 60° the calculated velocity dispersion is affected by projection errors. We see that most spiral galaxies have a velocity dispersion around \SI{10}{\km\per\s} or \SI{11}{\km\per\s} and there is a gap between dispersion zero and the observed velocity dispersions.}
		\label{figure-histogram-dispersion}
	\end{figure}
	
	The dissipative process just described must be strong because everywhere in the Milky Way atomic hydrogen has a low velocity dispersion slightly below \SI{10}{\km\per\s} (see, e.g., \cite{2017A&A...607A.106Marasco} where they have measured the velocity dispersion in the inner regions of our galaxy). So in the Milky Way the spiral activity does not manage to increase this dispersion like in our simulation. The observed velocity dispersion of atomic hydrogen in the Milky Way is typical for other spiral galaxies, too. In the sample of twenty nearby spiral galaxies from \cite{2008AJ....136.2782Leroy} most spiral galaxies have a dispersion close to \SI{10}{\km\per\s} (see Figure \ref{figure-histogram-dispersion}). This is higher than the value one would expect from thermal considerations \citep{2009AJ....137.4424TamburroLeroyEtAl}. In spiral galaxies most hydrogen can be found either in a cold ($\sim \SI{100}{\K}$) or in a warm ($\sim \SI{8000}{\K}$) thermal equilibrium. Cold atomic hydrogen has a line width of $\sim \SI{1}{km\per\s}$, while warm atomic hydrogen has a line width of $\sim \SI{8}{\km\per\s}$. But if the dissipative process of cloud-cloud collisions is strong, why does it not reduce the velocity dispersion to the minimal thermal value somewhere between \SI{1}{\km\per\s} and \SI{8}{\km\per\s}?\begin{figure}
		\begin{center}
			\input{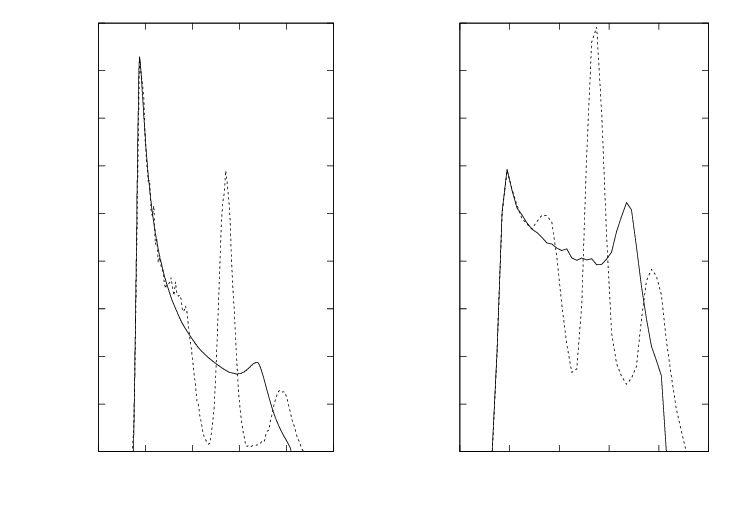}
			\caption{The two figures show the Jeans instability in action if we choose in our model $\sigma=\SI{9}{\km\per\s}$. The left plot shows the density as a function of the radius at time zero (solid line) and after \SI{800}{\Myr} (dashed line). The right plot shows the velocity dispersion as a function of the radius at time zero (solid line) and after \SI{800}{\Myr} (dashed line). After \SI{800}{\Myr} the Jeans instability has rearranged the mass beyond \SI{10}{\kpc}. This mass is now concentrated in two rings around the galactic centre. The velocity dispersion in these rings is higher than initially. In the simulation that was run to create these plots axial symmetry was enforced to suppress the formation of spiral arms so that we can study the Jeans instability isolated. If in this simulation we choose $\sigma=\SI{11}{\km\per\s}$, like in Section \ref{subsection-spiral-activity} above, the model just remains stable.}
			\label{figure-Jeans-Dispersion-9}
		\end{center}
	\end{figure}
	
	Our model offers an answer to this question, again in the form of an instability. If in our model we choose $\sigma \gtrapprox \SI{10}{\km\per\s}$ only the instability from Section \ref{subsection-spiral-activity}, which causes the spiral arms, is active. If, however, we choose $\sigma \lessapprox \SI{10}{\km\per\s}$ a second instability enters the model: The Jeans instability. This instability rearranges the masses and increases the velocity dispersion. In Figure \ref{figure-Jeans-Dispersion-9} we show this instability in action in our model with $\sigma=\SI{9}{\km\per\s}$. The threshold between stability and the Jeans instability coincides with the observed velocity dispersion of atomic hydrogen in the Milky Way and in most other spiral galaxies. A dissipative process -- however strong it may by -- cannot reduce the velocity dispersion below this threshold  because if it does so, the Jeans instability starts to work against it. Thus -- assuming that other spiral galaxies are comparable to the Milky Way -- the Jeans instability can explain why most spiral galaxies share the same velocity dispersion of atomic hydrogen well above the value expected from thermal considerations.
	
	Interestingly, the Jeans instability in our model triggers only in the outskirts of the galactic disc. And it is good that it does not trigger in the inner regions, too, because there the density of the stellar disc is higher than the density of the ISM disc. But at the present the dynamics of the stellar component are missing in our model (see the discussion in Section \ref{subsection-multiphase-models} how we plan to include the dynamics of this component in future models). Obviously these dynamics will affect the Jeans instability in the inner regions of our model. On the contrary in the outskirts of our model the ISM disc, which we have modelled dynamically, dominates the mass. So there our model has to predict the velocity dispersion correctly; and this it does.
	
	\subsection{What about non-baryonic, dark matter} \label{subsection-DM-cannot-explain-spirals-and-dispersion}
	
	Let us pose a last question here in this paper: Do models that use a halo of non-baryonic, dark matter have similar easy explanations for the spiral structure in the Milky Way's ISM and the velocity dispersion of atomic hydrogen? The answers is No, at least not ad hoc. The problem is that a dark matter halo provides too much stability.
	
	In our baryonic model we can freeze four fifths of the ISM mass and study only the dynamical rest. Then the frozen mass is kind of a rigid dark matter component and the remaining dynamical disc resembles an ISM disc like it is predicted by Model 2 of \cite{2008gady.book.....BinneyTremaine} (see Section \ref{subsection-comparision-with-BT}). In such a disc neither the instability that causes the spiral structure (Section \ref{subsection-spiral-activity}) nor the Jeans instability, which explains the velocity dispersion (Section \ref{subsection-Jeans-instability}), is active. So models that make use of dark matter have to search for more complicated answers to explain the two dynamical phenomenons spiral activity and velocity dispersion.

	\subsection{Spiral Arms and Bar-Shaped Bulges result from the same instability} \label{subsection-spirals-and-bulges}
	
	We want to close this section with a short note, which is independent of the previous discussions. When we started this work, we had at first self-consistent models, too, but these lacked many properties of the models presented here in this paper. Almost all of these first models degenerated into bars or lop-sided discs. In this context it is noteworthy that the formation of a bar, of a lop-sided disc or of large scale spiral structures is always preceded by an exponential growth of the tangential accelerations as shown in Figure \ref{figure-RMS-of-atan}. So these three phenomena are all due to the same instability. This instability always is triggered if sufficient mass is in sufficiently rotational motion. But to which result this instability leads, depends on the concrete distribution of the mass and its dynamical properties. We have not examined this instability any further, but it is obvious that a good understanding of it would prove very useful since it is both responsible for the formation of large scale spiral structures and relevant for how bulges are shaped.
	
	\section{Conclusion} \label{section-Conclusion}
	
	We presented a new technique to set up models of spiral galaxies where the dynamics of the different components are self-consistent (Section \ref{section-Mestel-disc-to-Milky-Way-model}). The dynamics of the resulting models  resemble very well the dynamics observed in real galaxies. We applied our technique and constructed a model for the Milky Way where the ISM is equipped with self-consistent dynamics (Sections \ref{subsection-cutout-Mestel-disc} and \ref{subsection-algorithm}). 
	
	We examined the physical properties of our model. Below we summarize the answers this model gives to the three questions from the introduction:
	\begin{enumerate}
		\item[] Where does the four-arm spiral pattern in the Milky Way's ISM originate from?
	\end{enumerate}
	In our model the origin of this spiral pattern lies in the dynamical properties of the ISM itself. Our self-consistent, axisymmetric model for the ISM suffers from an instability that transforms the ISM disc into a disc with four large spiral arms that resemble very well the spiral arms observed in the Milky Way's ISM (Section \ref{subsection-spiral-activity}).
	
	\begin{enumerate}
		\item[] Why does atomic hydrogen have in most spiral galaxies the same velocity dispersion well above the value expected from thermal considerations?
	\end{enumerate}
	In the outer regions of the galactic disc the Jeans instability is active in our model if we choose a velocity dispersion that is below \SI{10}{\km\per\s} (Section \ref{subsection-Jeans-instability}). Thus if we would include in our model the dissipative process of cloud-cloud collisions (Section \ref{subsection-cloud-cloud-collsions}), this dissipative process can only reduce the velocity dispersion to the threshold between stability and the Jeans instability. Then the Jeans instability starts to work against it and stops the further reduction of the velocity dispersion. This threshold between stability and the Jeans instability coincides with the observed values of the velocity dispersion of atomic hydrogen in the outer regions of spiral galaxies (Figure \ref{figure-Jeans-Dispersion-9}). Assuming that the structure of most spiral galaxies is comparable to the Milky Way's, the Jeans instability offers an explanation why in most spiral galaxies the velocity dispersion of atomic hydrogen gets reduced to the same value somewhere around \SI{10}{\km\per\s}, which is well above the value expected from thermal considerations.
	
	\begin{enumerate}
		\item[] Is a halo of non-baryonic, dark matter necessary to explain the Milky Way's flat circular velocity curve?
	\end{enumerate}
	Our model explains the Milky Way's flat circular velocity curve out to \SI{25}{\kpc} without relying on non-baryonic, dark matter (Section \ref{subsection-algorithm}). Our model has an extension of \SI{21}{\kpc} and is made up only of baryonic matter. The densities of the three baryonic components bulge, stellar disc and ISM match the densities derived from observations up to a prefactor. The three prefactors take values between three and five making our model a typical example for the Bosma effect (Section \ref{subsection-Bosma}). Following the interpretation of the Bosma effect from \cite{2011A&A...532A.121Hessman}, a baryonic, yet unobserved mass component that resides in the disc and traces the known baryonic components can explain the Milky Way's flat circular velocity curve. Then a halo of non-baryonic dark matter is no longer necessary.
	An argument in favour of an ISM density as high as in our model is provided by the measurements of the two voyager probes, the first artificial objects to reach the ISM. Both probes measure a three times higher density than was expected a priori (Section \ref{subsection-voyager-probes}). But perhaps an even stronger evidence for the existence of a baryonic, yet unobserved matter component, which shares the same dynamical properties as the HI gas, is given by the answers to the other two questions: Both answers would fail if we included a stabilizing, non-baryonic dark matter component and reduced the ISM mass, so that it matches current assumptions (Section \ref{subsection-DM-cannot-explain-spirals-and-dispersion}). 
	
	Summarizing the properties of our dynamical model match well to observational data available for the real galaxy. This motivates us to conclude that one should consider the Bosma effect as a promising alternative to the concept of non-baryonic dark matter.
	
	What should be done next? It is necessary to examine also models including a dynamical stellar disk. For this purpose our technique, which is currently limited to regions where the circular velocity curve is flat, must be developed further. The key that enabled us to construct the here presented model was the scaling property of orbits reported at the end of Section \ref{subsection-Mestel-disc}. In every region where the circular velocity curve of a galaxy can be approximated by a power-law\footnote{Approximating a circular velocity curve with a power-law means that there are $C>0$ and $\alpha\in\R$ such that $v_c(r) \approx Cr^\alpha$.} similar scaling properties exist \citep{1998MNRAS.300...83EvansRead}. Using these scaling properties will enable us to model dynamics also in the central regions of spiral galaxies where the circular velocity curve is typically rising linearly. This will enable us to add also a dynamical stellar disc to the model. It is important to check which answers to the three questions from the introduction continue to hold once we include a stellar disc.
	
	Next it would be interesting to extend our research to a larger sample of galaxies. For the THINGS galaxies \citep{2008AJ....136.2563Walter} good observational data is available. This data already enabled several authors to construct mass models for these galaxies that explain the respective circular velocity curves by either using non-baryonic, dark matter \citep{2008AJ....136.2648deBlok}, modified gravity laws (MOND) \citep{2011A&A...527A..76GentileFamaeyBlok} or the Bosma effect \citep{2011A&A...532A.121Hessman}. But these authors did not examine the stability properties of their models using dynamical models. It would be very interesting to examine dynamical versions of these models and ask similar questions as we did here in this paper. Will these dynamical models give similar answers as our model for the Milky Way gave?
	
	And one last question deserves our attention: What is the very nature of the instability that causes the spiral arms to form? In Figure \ref{figure-RMS-of-atan} we have seen that this instability shows up as an exponential growth of the forces in tangential direction. It must be possible to track down the nature of this instability with rigorous mathematical methods. Since this instability is not only responsible for the formation of spiral structures but also for the formation of bar-shaped bulges (Section \ref{subsection-spirals-and-bulges}), a better understanding of this instability would enhance our understanding of both: Spirals in the disc and bulges in the centre of spiral galaxies.
	
	\section*{Acknowledgement}
	
	Funded by the Deutsche Forschungsgemeinschaft (DFG, German Research Foundation) -- Projektnummer RE 885/4-1.
	
	\bibliographystyle{mnras}
	\bibliography{bibliography_math,bibliography_phys,bibliography_mond}
	
	\newpage
	
	\appendix
	
	\section{Appendix}

	\subsection{Our model for the Mestel disc}
	
	We have left open the proofs of Lemma \ref{lemma velocity dispersion of model for Mestel disc} and Remark \ref{remark velocity max and min of model for Mestel disc}. For convenience let us write down Lemma \ref{lemma velocity dispersion of model for Mestel disc} again before we give it proof:

	\begin{customlem}{\ref{lemma velocity dispersion of model for Mestel disc}}
		When $\sigma\searrow 0$ the average tangential velocity in the model $f_0$ for the Mestel disc is
		\begin{equation*}
		v_{t,avg} = v_0 + o(\sigma)
		\end{equation*}
		independent of radius. The dispersion of the tangential velocities is
		\begin{equation*}
		\sigma + o(\sigma)
		\end{equation*}
		and also independent of radius; in the rest of this paper we refer to $\sigma$ as the velocity dispersion. Further the dispersion of the radial velocities is $\sqrt 2\sigma + o(\sigma)$ and
		\begin{equation} 
		C_0 = \frac{v_0^3}{8\sqrt 2 \pi^2 G \sigma^2} + o(\sigma^{-2}).
		\end{equation}
	\end{customlem}
	
	\begin{proof}[Proof of Lemma \ref{lemma velocity dispersion of model for Mestel disc}]
		All $o(\sigma^m)$ and $O(\sigma^m)$ terms in this proof are with respect to $\sigma\searrow 0$. Let $\sigma,v_0 > 0$. At radius $r>0$ the radial and the tangential component $v_r$ and $v_t$ of the velocity are distributed according to the law
		\begin{equation*}
		(v_r,v_t) \sim p(v_r,v_t) := \frac{1}{\Sigma_0(r)} f_0\left( r,0 , v_r,v_t \right)
		\end{equation*}
		where we have written $f_0$ in Cartesian coordinates; $x=(r,0)$ and $v=(v_1,v_2)=(v_r,v_t)$. $\int p\diff v = 1$ and we treat $p$ as a probability density function. Thank to \eqref{equ proof thm Mestel disc argument of Phi0} we have
		\begin{equation*}
		p(v_r,v_t) = \begin{cases}
		1/(Iv_t), & \text{if }(v_r^2+v_t^2-v_0^2)/2 - v_0^2 \log (v_t/v_0) \leq (2\sigma)^2 \text{ and } v_t\geq 0 \\
		0, & \text{else.}
		\end{cases}
		\end{equation*}
		We see that $p$ does not depend on the radius. The average tangential velocity is the expected value $\text E(v_t)$ and the square of the velocity dispersion is the Variance $\text{Var}(v_t)$. We are interested in the behaviour of $\text E(v_t)$ and $\text{Var}(v_t)$ when $\sigma\searrow 0$. Then the support of $p$ shrinks more and more. Since the point $(0,v_0)$ is always located inside the support of $p$ it is convenient to introduce coordinates that zoom onto that point while $\sigma\searrow 0$. We use the coordinate transformation
		\begin{align*}
		& v_r = \sigma w_r, \\
		& v_t = v_0 + \sigma ( w_t - v_0 ).
		\end{align*}
		Then $\diff v = \sigma^2 \diff w$,
		\begin{equation*}
		(w_r, w_t) \sim q(w_r,w_t) = \sigma^2 p(\sigma w_r,v_0+\sigma(w_t-v_0))
		\end{equation*}
		and
		\begin{equation*}
		\text E(v_t) = \text E(v_0+\sigma(w_t-v_0)).
		\end{equation*}
		For $\sigma>0$ we denote the support of $q$ by
		\begin{align*}
		W_\sigma := &\left\{ (w_r,w_t) \middle| w_t\geq v_0-\frac{v_0}{\sigma}, \, \sigma^2\frac{w_r^2+(w_t-v_0)^2}{2} + \sigma v_0(w_t-v_0) - v_0^2\log\left(1+\sigma\frac{w_t-v_0}{v_0}\right) \leq 4\sigma^2 \right\} \\
		= &  \left\{ (w_r,w_t) \middle| w_t\geq v_0-\frac{v_0}{\sigma}, \, \frac{w_r^2+(w_t-v_0)^2}{2} +\frac{v_0(w_t-v_0)}{\sigma} -\frac{v_0^2}{\sigma^2} \log\left(1+\sigma\frac{w_t-v_0}{v_0}\right) \leq 4  \right\}.
		\end{align*}
		Thus
		\begin{equation*}
		q(w_r,w_t) = \begin{cases}
		\sigma^2/(I(v_0+\sigma(w_t-v_0))), & \text{if }(w_r,w_t)\in W_\sigma \\
		0, & \text{else.}
		\end{cases}
		\end{equation*}
		We set
		\begin{equation*}
		W_0 := \left\{ (w_r,w_t) \middle| \frac{w_r^2}{2} + (w_t-v_0)^2 \leq 4 \right\}.
		\end{equation*}
		Using that there is a compactum $K\subset\R^2$ such that
		\begin{equation*}
		W_\sigma\subset K
		\end{equation*}
		for all $\sigma>0$ sufficiently small, a second order Taylor approximation gives
		\begin{align*}
		\frac{v_0^2}{\sigma^2}\log\left(1+\sigma\frac{w_t-v_0}{v_0}\right) &= \frac{v_0^2}{\sigma^2}\left( \sigma\frac{w_t-v_0}{v_0} - \frac 12 \sigma^2 \frac{(w_t-v_0)^2}{v_0^2} + O(\sigma^3) \right) \\
		& = \frac{v_0(w_t-v_0)}{\sigma} - \frac 12 (w_t-v_0)^2 + O(\sigma).
		\end{align*}
		Thus for $\sigma\searrow 0$ the envelope of $W_\sigma$ converges uniformly to the envelope of $W_0$. This fact we will employ frequently below. We have
		\begin{align}
		\text E(v_t) & = \text E(v_0 + \sigma (w_t-v_0)) \nonumber\\
		& = \int(v_0+\sigma(w_t-v_0)) q(v_r,v_t) \diff w \nonumber \\
		&= \frac{\sigma^2}{I} \int_{W_\sigma} \diff w. \label{equ proof velocity dispersion E(vt)}
		\end{align}
		We have
		\begin{align*}
		I &= \int_{\{(v_r^2+v_t^2-v_0^2)/2 - v_0^2\log(v_t/v_0) \leq (2\sigma)^2,\,v_t\geq 0\}} \frac{\diff v}{v_t} = \sigma^2 \int_{W_\sigma} \frac{\diff w}{v_0+\sigma(w_t-v_0)}
		\end{align*}
		and hence
		\begin{equation} \label{equ proof velocity dispersion sigma square over I}
		\frac{\sigma^2}{I} \rightarrow \frac{v_0}{\measure(W_0)} \quad \text{for }\sigma\rightarrow 0.
		\end{equation}
		If we would pass to the limit in \eqref{equ proof velocity dispersion E(vt)} now, we would only get
		\begin{equation*}
		\text E(v_t) = v_0 +o(1).
		\end{equation*}
		This result makes only use of the fact that $(0,v_0)$ lies inside of $W_\sigma$ and  is too weak to analyse $\text{Var}(v_t)$. We must elaborate that $(0,v_0)$ marks the centre of $W_0$ to get better convergences:
		\begin{align*}
		\text E(v_t) & = v_0 + \frac{\sigma^2}{I}\int_{W_\sigma} \diff w - v_0 \frac{\sigma^2}{I} \int_{W_\sigma} \frac{\diff w}{v_0+\sigma(w_t-v_0)}\\
		& = v_0 + \frac{\sigma^2}{I}\int_{W_\sigma}\left(1-\frac{v_0}{v_0+\sigma(w_t-v_0)}\right)\diff w \\
		& = v_0 + \frac{\sigma^3}{I} \int_{W_\sigma}\frac{w_t-v_0}{v_0+\sigma(w_t-v_0)} \diff w .
		\end{align*}
		Using \eqref{equ proof velocity dispersion sigma square over I} and that
		\begin{equation*}
		\int_{W_\sigma}\frac{w_t-v_0}{v_0+\sigma(w_t-v_0)} \diff w \rightarrow \int_{W_0}\frac{w_t-v_0}{v_0} \diff w = 0 \quad \text{for }\sigma\searrow 0,
		\end{equation*}
		we get
		\begin{equation} \label{equ proof velocity dispersion E(vt) result}
		\text E(v_t) = v_0 + o(\sigma).
		\end{equation}
		We have
		\begin{align*}
		\text{Var} (v_t ) &= \int |v_t-\text E(v_t)|^2 p(v_r,v_t) \diff v \\
		& = \int | v_t - v_0 + v_0 - E(v_t)|^2 p(v_r,v_t) \diff v \\
		& = \int|v_t-v_0|^2p\diff v + 2\int(v_t-v_0)(v_0-\text E(v_t))p\diff v + |v_0-\text E(v_t)|^2.
		\end{align*}
		\eqref{equ proof velocity dispersion E(vt) result} and
		\begin{equation*}
		\|v_t-v_0\|_{L^\infty(\supp p)} = O(\sigma)
		\end{equation*}
		imply
		\begin{equation*}
		\text{Var}(v_t) = \int|v_t-v_0|^2p\diff v + o(\sigma^2).
		\end{equation*}
		Since 
		\begin{equation*}
		\int|v_t-v_0|^2p\diff v = \frac{\sigma^4}{I}\int_{W_\sigma}\frac{|w_t-v_0|^2}{v_0+\sigma(w_t-v_0)} \diff w
		\end{equation*}
		and
		\begin{equation*}
		\int_{W_\sigma}\frac{|w_t-v_0|^2}{v_0+\sigma(w_t-v_0)} \diff w \rightarrow \int_{W_0}\frac{|w_t-v_0|^2}{v_0} \diff w \quad \text{for }\sigma\searrow 0,
		\end{equation*}
		we have
		\begin{equation*}
		\int|v_t-v_0|^2p\diff v - \sigma^2\frac{\sigma^2}{I} \int_{W_0}\frac{|w_t-v_0|^2}{v_0} \diff w = o(\sigma^2)
		\end{equation*}
		and
		\begin{equation*}
		\text{Var}(v_t) = \sigma^2\frac{\sigma^2}{Iv_0}\int_{W_0}|w_t-v_0|^2\diff w + o(\sigma^2).
		\end{equation*}
		With the transformation $w_t-v_0 = 2s$
		\begin{align*}
		\int_{W_0}|w_t-v_0|^2\diff w & = \int_{v_0-2}^{v_0+2}\int_{-\sqrt{8-2(w_t-v_0)^2}}^{\sqrt{8-2(w_t-v_0)^2}}\diff w_r \, |w_t-v_0|^2 \diff w_t \\
		& = 2\sqrt 8 \int_{v_0-2}^{v_0+2}\sqrt{1-\frac{(w_t-v_0)^2}{4}}|w_t-v_0|^2\diff w _t \\
		& = 32\sqrt 2\int_{-1}^1\sqrt{1-s^2}s^2\diff s. 
		\end{align*}
		Using the transformation $s^2=t$, $\diff t = 2 s\diff s$ and the Beta-function $B$
		\begin{align*}
		\int_{W_0}|w_t-v_0|^2\diff w & = 64\sqrt 2 \int_0^1 \sqrt{1-s^2}s^2 \diff s = 32 \sqrt 2 \int_0^1 \sqrt{1-t} \sqrt t \diff t \\
		& = 32\sqrt 2 B(3/2,3/2) = 4\sqrt 2\pi = \measure(W_0).
		\end{align*}
		Since
		\begin{equation*}
		\frac{\sigma^2}{Iv_0} \rightarrow \measure(W_0)^{-1}\quad \text{for }\sigma \searrow 0,
		\end{equation*}
		this implies that
		\begin{equation*}
		\text{Var}(v_t) = \sigma^2 + o(\sigma^2).
		\end{equation*}
		Due to symmetry the average radial velocity is zero. $\text{Var}(v_r)$ can be calculated with the same techniques as above and one gets that the dispersion of the radial velocities is given by $\sqrt 2\sigma + o(\sigma)$. Further
		\begin{equation*}
		C_0 = \frac{v_0^2}{2\pi GI} = \frac{v_0^2}{2\pi G\sigma^2}\frac{\sigma^2}{I}
		\end{equation*} 
		and
		\begin{equation*}
		\left|C_0 - \frac{v_0^3}{8\sqrt 2 \pi^2 G\sigma^2}\right| = \frac{v_0^2}{2\pi G\sigma^2} \left| \frac{\sigma^2}{I} - \frac{v_0}{4\sqrt 2 \pi} \right|.
		\end{equation*}
		By \eqref{equ proof velocity dispersion sigma square over I}
		\begin{equation*}
		C_0 = \frac{v_0^3}{8\sqrt 2 \pi^2 G \sigma^2} + o(\sigma^{-2}).
		\end{equation*}

	\end{proof}

	In Remark \ref{remark velocity max and min of model for Mestel disc} we had left open the proof that the minimal and the maximal appearing tangential velocity in our self-consistent model $f_0$ for the Mestel disc are given by
	\begin{equation}
		v_{t,min/max} = v_0 \sqrt{-W_{0/-1}\left[ -\exp\left( -\frac{8\sigma^2}{v_0^2} - 1 \right) \right]}.
	\end{equation}
	Let us prove this, too.
	\begin{prop}
		Let $a \in \R$, then
		\begin{equation*}
			x^2+a = \log x, \quad x>0,
		\end{equation*}
		has
		\begin{enumerate}[label=]
			\item two solutions iff $a< - \log\sqrt 2 - 1/2$,
			\item one solution iff $a = - \log\sqrt 2 - 1/2$,
			\item no solution iff $a> - \log\sqrt 2 - 1/2$.
		\end{enumerate}
		The solutions are
		\begin{equation*}
			x_i = \sqrt{-\frac{1}{2}W_i(-2e^{2a})}
		\end{equation*}
		with $i=0,-1$ and $W_i$ denoting the $i$-th branch of the Lambert W function.
	\end{prop}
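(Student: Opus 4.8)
The plan is to separate the two halves of the statement: the counting of solutions, which I would do by an elementary monotonicity argument, and the explicit formula, which follows by massaging the equation into the defining relation of the Lambert $W$ function.

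For the count, consider $h(x) := x^2 - \log x$ on $(0,\infty)$, so that the equation reads $h(x) = -a$. Since $h'(x) = 2x - 1/x = (2x^2-1)/x$, the function $h$ is strictly decreasing on $(0,1/\sqrt 2)$ and strictly increasing on $(1/\sqrt 2,\infty)$, with a unique global minimum $h(1/\sqrt 2) = \tfrac12 + \log\sqrt 2$; moreover $h(x)\to\infty$ both as $x\searrow 0$ and as $x\to\infty$. By the intermediate value theorem the horizontal line $y=-a$ therefore meets the graph in exactly two points when $-a > \tfrac12+\log\sqrt2$, in exactly one point (namely $x=1/\sqrt2$) when $-a=\tfrac12+\log\sqrt2$, and in no point when $-a<\tfrac12+\log\sqrt2$. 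Rearranging the inequalities gives the stated trichotomy with threshold $a = -\log\sqrt2 - \tfrac12$.

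For the explicit formula I would rewrite $x^2+a=\log x$ (with $x>0$) equivalently as $x\,e^{-x^2}=e^{a}$. Both sides being strictly positive, squaring is an equivalence, so this is the same as $x^2 e^{-2x^2}=e^{2a}$, i.e. $(-2x^2)\,e^{-2x^2}=-2e^{2a}$. This is precisely $W e^{W}=z$ with $W=-2x^2$ and $z=-2e^{2a}$, hence $-2x^2$ must be a real branch value $W_i(-2e^{2a})$ and $x=\sqrt{-\tfrac12 W_i(-2e^{2a})}$ for $i\in\{0,-1\}$. It remains to reconcile this with the count: real branches of $W$ exist exactly when $-2e^{2a}\ge -1/e$, which rearranges to $a\le -\log\sqrt2-\tfrac12$; for $-2e^{2a}\in(-1/e,0)$ one has $W_0\in(-1,0)$ and $W_{-1}\in(-\infty,-1)$, both negative, so the two radicands are positive and yield two distinct positive solutions, while at $-2e^{2a}=-1/e$ the branches coincide at $-1$ and give the single root $x=1/\sqrt2$.

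Essentially everything here is routine; the only places needing care are verifying that the squaring step is a genuine equivalence (it is, since $x>0$ makes both sides positive, so no spurious roots appear) and keeping the signs straight so that the analytic domain of the two real branches, $a\le-\log\sqrt2-\tfrac12$, matches the threshold from the monotonicity argument. That sign bookkeeping is the step I would expect to be the main pitfall, and I would guard against it by checking both descriptions at the boundary $a=-\log\sqrt2-\tfrac12$, where each must return the single value $x=1/\sqrt2$.
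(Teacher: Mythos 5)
Your proposal is correct, and it reaches the result by a route that differs from the paper's in both halves. For the solvability trichotomy, the paper does not study $x^2-\log x$ directly: it substitutes $x=\exp(-y/2+a)$, reduces the equation to $ye^y=-2e^{2a}$, and reads off the threshold from the global minimum $-1/e$ of $y\mapsto ye^y$; your elementary monotonicity analysis of $h(x)=x^2-\log x$ with minimum $h(1/\sqrt2)=\tfrac12+\log\sqrt2$ gives the same trichotomy without any change of variables, and your boundary check at $a=-\log\sqrt2-\tfrac12$ (both descriptions returning $x=1/\sqrt2$) is a useful consistency cross-check the paper does not make explicit. For the closed form, the paper first obtains $x_i=\exp\bigl(-\tfrac12 W_i(-2e^{2a})+a\bigr)$ and then needs the auxiliary identity $-\log(-W_i(-e^b))=W_i(-e^b)-b$ (applied with $b=2a+\log 2$) to massage this into $\sqrt{-\tfrac12 W_i(-2e^{2a})}$; your route -- exponentiating to $xe^{-x^2}=e^a$, squaring (a genuine equivalence since both sides are positive for $x>0$), and setting $W=-2x^2$ so that $We^W=-2e^{2a}$ -- lands on the stated formula immediately and avoids that identity altogether. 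Both arguments are sound and hinge on the same object, the real branches of $W$ at $-2e^{2a}$; yours is the more direct derivation of the formula, while the paper's single substitution packages the counting and the solution formula into one analysis of $ye^y$ at the cost of the extra simplification step.
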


	\begin{rem} \label{remark Lambert W}
		The two Lambert W functions $W_{-1}:[-1/e,0)\rightarrow(-\infty,-1]$ and $W_0:[-1/e,\infty)\rightarrow[-1,\infty)$ are the two branches of the inverse function of
		\begin{equation*}
			(-\infty,\infty)\ni z\mapsto ze^z \in [-1/e,\infty).
		\end{equation*}
		In particular for all admissible $\zeta\geq -1/e$ and $i=0,-1$
		\begin{equation*}
			\zeta = W_i(\zeta) \exp(W_i(\zeta)).
		\end{equation*}
	\end{rem}

	\begin{proof}[Proof of the Proposition]
		Use the transformation
		\begin{equation*}
			x = \exp\left(-\frac{y}{2} + a\right) , \quad y\in\R.
		\end{equation*}
		Then
		\begin{align} \label{equ proof prop Lambert W solution x iff solution y}
			x^2 + a = \log x &\iff \exp\left(-y + 2a\right) = -\frac{y}{2} \nonumber \\
			& \iff -2e^{2a} = y e^y.
		\end{align}
		Since
		\begin{equation*}
			(ye^y)' = (y+1)e^y = 0 \iff y = -1
		\end{equation*}
		we see that $ye^y$ takes its global minimum at $y=-1$ and hence
		\begin{equation*}
			ye^y \geq -\frac{1}{e}
		\end{equation*}
		for all $y\in\R$. Thus \eqref{equ proof prop Lambert W solution x iff solution y} can hold iff
		\begin{equation*}
			2e^{2a} \leq \frac{1}{e} \iff a \leq -\log\sqrt 2 - \frac 12.
		\end{equation*}
		In this case we have the two solutions
		\begin{equation*}
			y_i = W_i(-2e^{2a})
		\end{equation*}
		with $i=0$ or $i=-1$. Thus
		\begin{equation*}
			x_i = \exp\left(-\frac{y_i}{2} + a \right) = \exp\left( - \frac 12 W_i(-2e^{2a})+a\right)
		\end{equation*}
		are the solutions of $x^2+a=\log x$. We want to simplify this formula further. For this purpose let $b\in\R$. With $\zeta = -e^b$ we get as in Remark \ref{remark Lambert W}
		\begin{equation*}
			-e^b = W_i(-e^b) \exp(W_i(-e^b)).
		\end{equation*}
		Thus
		\begin{equation*}
			\frac{-1}{W_i(-e^b)} = \exp(W_i(-e^b) - b)
		\end{equation*}
		Taking the logarithm on both sides of the equation gives
		\begin{equation*}
		-\log(-W_i(-e^b)) = W_i(-e^b) - b, \quad b\in\R.
		\end{equation*}
		Applying this equality with $b=2a+\log 2$ gives
		\begin{align*}
			\exp\left( - \frac 12 W_i(-2e^{2a})+a\right) &= \exp\left[ -\frac{1}{2} \left[ W_i(-e^{2a + \log 2}) - (2a + \log 2) \right] \right] \exp\left[ -\frac 12 \log 2 \right]\\
			& = \frac 1{\sqrt 2} \exp \left[ \frac{1}{2} \log\left( -W_i(-e^{2a+\log 2}) \right) \right] \\
			& = \sqrt{-\frac 12 W_i(-2e^{2a})}.
		\end{align*}
	\end{proof}

	\begin{lem}
		For $v_0,\sigma>0$ and $f_0$ as in Lemma \ref{lemma velocity dispersion of model for Mestel disc} the minimal and the maximal appearing tangential velocity at every position $x\in\R^2\backslash\{0\}$ are given by
		\begin{equation*}
			v_{t,min/max} = v_0 \sqrt{-W_{0/-1}\left[ -\exp\left( -\frac{8\sigma^2}{v_0^2} - 1 \right) \right]}
		\end{equation*}
	\end{lem}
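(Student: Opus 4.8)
The plan is to reduce the statement about the velocity support of $f_0$ at a fixed point to a one–variable transcendental equation and then to apply the preceding Proposition after an appropriate rescaling.

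First I would fix $x\in\R^2\setminus\{0\}$ and pass to the coordinates $(v_r,v_t)$ from the proof of the Theorem. By identity \eqref{equ proof thm Mestel disc argument of Phi0} the argument of $\Phi_0$ equals
\[
\frac{v_r^2+v_t^2}{2}-v_0^2\log\frac{v_t}{v_0}-\frac{v_0^2}{2}=\frac{v_r^2}{2}+g(v_t),\qquad g(v_t):=\frac{v_t^2}{2}-v_0^2\log\frac{v_t}{v_0}-\frac{v_0^2}{2},
\]
which is independent of $r=|x|$. With the choice \eqref{phi-definition} of $\Phi_0$, a tangential velocity $v_t>0$ therefore appears (for some admissible $v_r$) precisely when $\min_{v_r}\bigl(\tfrac{v_r^2}{2}+g(v_t)\bigr)=g(v_t)<(2\sigma)^2$. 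A short calculus check shows that $g$ is strictly convex on $(0,\infty)$, satisfies $g(v_t)\to+\infty$ as $v_t\searrow 0$ and as $v_t\to\infty$, and attains its unique minimum $g(v_0)=0$ at $v_t=v_0$. Hence for $\sigma>0$ the equation $g(v_t)=4\sigma^2$ has exactly two solutions $v_{t,min}<v_0<v_{t,max}$, the set $\{v_t>0:g(v_t)<4\sigma^2\}$ equals the open interval $(v_{t,min},v_{t,max})$, and these two solutions are the minimal and maximal appearing tangential velocities.

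It remains to solve $g(v_t)=4\sigma^2$ explicitly. Substituting $v_t=\sqrt2\,v_0\,y$ and dividing by $v_0^2$ turns the equation into
\[
y^2+a=\log y,\qquad a:=-\log\sqrt2-\frac12-\frac{4\sigma^2}{v_0^2},
\]
which is exactly the form treated in the preceding Proposition. Since $a<-\log\sqrt2-\tfrac12$ for $\sigma>0$, the Proposition yields the two roots $y_i=\sqrt{-\tfrac12 W_i(-2e^{2a})}$, $i\in\{0,-1\}$, and a direct computation gives $-2e^{2a}=-\exp(-8\sigma^2/v_0^2-1)$. Undoing the substitution, $v_{t,i}=\sqrt2\,v_0\,y_i=v_0\sqrt{-W_i\!\left[-\exp\!\left(-8\sigma^2/v_0^2-1\right)\right]}$ for $i\in\{0,-1\}$.

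Finally I would match the branches to $v_{t,min}$ and $v_{t,max}$. The argument $-\exp(-8\sigma^2/v_0^2-1)$ lies in $(-1/e,0)$ for $\sigma>0$, so by Remark \ref{remark Lambert W} one has $W_0(\cdot)\in(-1,0)$ and $W_{-1}(\cdot)\in(-\infty,-1)$; consequently $v_0\sqrt{-W_0(\cdot)}<v_0<v_0\sqrt{-W_{-1}(\cdot)}$. Comparing this with $v_{t,min}<v_0<v_{t,max}$ forces $v_{t,min}=v_0\sqrt{-W_0(\cdot)}$ and $v_{t,max}=v_0\sqrt{-W_{-1}(\cdot)}$, which is the claimed formula. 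I do not expect a real obstacle in this argument; the only points that need some care are the monotonicity step reducing the problem to the single inequality $g(v_t)<(2\sigma)^2$, and the bookkeeping in the rescaling $v_t=\sqrt2\,v_0\,y$ that brings the equation into the exact normalisation $y^2+a=\log y$ of the Proposition.
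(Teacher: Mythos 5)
Your proposal is correct and follows essentially the same route as the paper: reduce to the radius-independent one-variable condition on $v_t$, use strict convexity of the function $g$, rescale via $v_t=\sqrt2\,v_0 y$ to reach the equation $y^2+a=\log y$ treated in the preceding Proposition, and read off the two Lambert-W roots. The only (welcome) addition is that you justify the branch assignment $W_0\leftrightarrow v_{t,min}$, $W_{-1}\leftrightarrow v_{t,max}$ explicitly, which the paper simply asserts.
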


	\begin{proof}
		Denote by $v_r,v_t$ the radial and the tangential component of the velocity. For every $x\in\R^2\backslash\{0\}$
		\begin{align*}
			(v_r,v_t) \in \supp f_0(x,\cdot) & \iff \frac{v_r^2+v_t^2}{2} - v_0^2\log \frac{v_t}{v_0} - \frac{v_0^2}{2} \leq (2\sigma)^2 \\
			& \iff \frac{1}{2} \left(\frac{v_t}{v_0}\right)^2 - \log \frac{v_t}{v_0} - \frac 12 \leq \left(\frac{2\sigma}{v_0}\right)^2 - \frac 12 \left(\frac{v_r}{v_0}\right)^2.
		\end{align*}
		Since the left side of the last inequality is strictly convex and diverges to infinity for $v_t\rightarrow 0$ and $v_t\rightarrow\infty$, the minimal and maximal tangential velocity are the two solutions of
		\begin{equation*}
			\frac{1}{2} \left(\frac{v_t}{v_0}\right)^2 - \log \frac{v_t}{v_0} - \frac 12 = \left(\frac{2\sigma}{v_0}\right)^2
		\end{equation*}
		With 
		\begin{equation*}
			x^2 = \frac 12 \left(\frac{v_t}{v_0} \right)^2
		\end{equation*}
		this holds iff
		\begin{align*}
			& x^2 - \log \sqrt 2 - \log x - \frac 12 = \left(\frac{2\sigma}{v_0}\right)^2 \\
			&\iff  x^2 - \left(\log \sqrt 2 + \frac 12 + \left(\frac{2\sigma}{v_0}\right)^2 \right) = \log x.
		\end{align*}
		This has the two solutions
		\begin{align*}
			x_i &= \sqrt{ -\frac 12 W_i\left[ -2\exp\left( -2\log \sqrt 2 - 1 - \frac{8\sigma^2}{v_0^2} \right) \right] } \\
			&= \sqrt{-\frac 12 W_i\left[ -\exp\left( -\frac{8\sigma^2}{v_0^2} - 1 \right) \right]}.
		\end{align*}
		Thus
		\begin{equation*}
			v_{t,min} = \sqrt 2 v_0 x_0 = v_0 \sqrt{-W_0\left[ -\exp\left( -\frac{8\sigma^2}{v_0^2} - 1 \right) \right]}
		\end{equation*}
		and
		\begin{equation*}
		v_{t,max} = \sqrt 2 v_0 x_{-1} = v_0 \sqrt{-W_{-1}\left[ -\exp\left( -\frac{8\sigma^2}{v_0^2} - 1 \right) \right]}
		\end{equation*}
		
	\end{proof}

	\subsection{Average $z$-distance in a disk with constant scale height}

	We claimed in Section \ref{subsection-cutout-Mestel-disc} that for a disc with the spatial density
	\begin{equation*}
	\Sigma(x_1,x_2)\exp\left( -|z|/z_g \right)
	\end{equation*}
	the expected value of the distance in $z$-direction between two particles that we draw at random from this density is $1.5z_g$. We give a short derivation of this. W.l.g. we set $z_g=1$ in the following calculations. Then the $z$-coordinates $Z_1$ and $Z_2$ of the two particles are distributed according to the law
	\begin{equation*}
	Z_1,Z_2 \sim \frac{1}{2} e^{-|z|}.
	\end{equation*}
	The probability that the distance between the two particles is lower than $\delta>0$ is
	\begin{equation*}
	P(|Z_1-Z_2|<\delta) = \frac{1}{4} \int_{-\infty}^{\infty} \int_{z-\delta}^{z+\delta} e^{-|z|}e^{-|z'|} \diff z' \diff z.
	\end{equation*}
	The probability distribution function $p(\delta)$ that corresponds to the random variable $|Z_1-Z_2|$ is given by
	\begin{align*}
	p(\delta) & = \frac{\diff}{\diff \delta} P(|Z_1-Z_2|<\delta) \\
	& = \frac{1}{4} \int_{-\infty}^\infty e^{-|z|}\left( e^{-|z+\delta|} + e^{-|z-\delta|} \right) \diff z \\
	& = \frac 12 \int_{-\infty}^\infty e^{-|z|}e^{-|z+\delta|} \diff z \\
	& = \frac 12 \int_{-\infty}^{-\delta} e^ze^{z+\delta} \diff z + \frac 12 \int_{-\delta}^0 e^ze^{-z-\delta} \diff z + \frac 12 \int_0^\infty e^{-z}e^{-z-\delta} \diff z \\
	& = \frac{e^\delta}{2}\int_{-\infty}^{-\delta} e^{2z}\diff z + \frac{e^{-\delta}}{2} \int_{-\delta}^0 \diff z + \frac{e^{-\delta}}{2} \int_0^\infty e^{-2z} \diff z \\
	& = \frac{e^\delta e^{-2\delta}}{4} + \frac{\delta e^{-\delta}}{2} + \frac{e^{-\delta}}{4} \\
	& = \frac 12(1+\delta) e^{-\delta}
	\end{align*}
	if $\delta>0$, and $p(\delta)=0$ if $\delta \leq 0$. Since
	\begin{equation*}
	\int_0^\infty \delta e^{-\delta} \diff \delta = \int_0^\infty\delta (-e^{-\delta})' \diff \delta = \int_0^\infty e^{-\delta} \diff \delta = 1
	\end{equation*}
	and
	\begin{equation*}
	\int_0^\infty \delta^2 e^{-\delta} \diff \delta = 2 \int_0^\infty \delta e^{-\delta} \diff \delta = 2,
	\end{equation*}
	we have for the expected value
	\begin{align*}
	EV(|Z_1-Z_2|) & = \int_0^\infty \frac{\delta}{2} (1+\delta) e^{-\delta}\diff \delta \\
	& = \frac 12 \int_0^\infty \delta e^{-\delta} \diff \delta + \frac 12 \int_0^\infty \delta^2 e^{-\delta} \diff \delta = \frac{3}{2}.
	\end{align*}
	These calculations show that for a disc with scale height $z_g$ the distance in $z$-direction between two randomly determined particles is at average $1.5z_g$.

	\subsection{Integrating the equations of motions in Section \ref{section-Stability-and-Spiral-Activity}}
	
	In Section \ref{section-Stability-and-Spiral-Activity} we discuss several simulations. We want to describe the numerical methods used in these simulations. 
	
	For every simulation we use a distribution function $f(x,v)$ that was generated with the algorithm from Section \ref{subsection-algorithm}. From this distribution function we draw at random particles with initial coordinates $(x_i(0),v_i(0))$ and integrate the equations of motion
	\begin{equation*}
	\dot{x_i} = v_i,\quad \dot{v_i} = G \sum_{j\neq i} \frac{x_j-x_i}{\left(|x_j-x_i|^2+\delta_z^2\right)^{3/2}}.
	\end{equation*}
	We have modified Newton's law of gravitation to take into account the disc thickness as we have done already in Section \ref{subsection-cutout-Mestel-disc}. As previously $\delta_z = 1.5z_g$ with $z_g=\SI{300}{\pc}$.
	
	We integrate the equations of motion numerically using a Velocity Verlet algorithm and we have implemented two ways to calculate the force efficiently. The first method uses a polar grid where the disc is divided in 100 uniform angles and along each radial line 400 meshpoints are distributed out to \SI{100}{\kpc}. Particles that move beyond \SI{100}{\kpc} are dropped from the simulation. To distribute the meshpoints radially, we approximate the density of our initial data by a continuous, piecewise function which is linearly increasing to \SI{4.3}{\kpc} and exponentially decreasing beyond. The meshpoints are distributed such that to each meshpoint the same mass of the approximate density would be assigned. In the simulation we use bi-linear interpolation to assign the mass of each particle to the four adjacent meshpoints, calculate the force between the meshpoints and get the force on the particles by again using bi-linear interpolation. For simulations with this polar mesh, we created 1 Million particles from $f(x,v)$. The second method enforces axial symmetry. This is achieved by calculating in each time step a histogram of the radial positions of the particles. This histogram is transformed into an axially symmetric density and from this density the forces on the particles are calculated. For simulations with enforced axial symmetry we used 100.000 particles.
	
\end{document}